\thanks{INRIA Rocquencourt, France}%
\thanks[liafa]{LIAFA, Paris, France}%
\thanks[orange]{Orange Labs, Issy-les-Moulineaux, France}
\max\set{1+\frac{1}{c},\mun}$ est une borne inférieure pour $u$,
\max\set{1+\frac{1}{c},\mun}$, $c\ge 2$ et si les boîtes sont fiables, alors il est possible d'avoir une taille de catalogue $m=\Omega(n)$, avec un algorithme d'allocation centralisé.
\max\set{1+\frac{1}{c},\mun}$, where $\mun\ge 1$ is the
\max\set{1+\frac{1}{c},\mun}$, $c\ge 2$ and nodes are reliable.
\def\yacine#1{}%{\color{red}[Yacine: {#1}]}}
\def\loufab#1{}%{\color{red}[Fabien Mathieu: {#1}]}}
\def\fabien#1{}%{\color{red}[Fabien de Montgolfier: {#1}]}}
\def\diego#1{}%{\color{red}[Diego: {#1}]}}
\def\laurent#1{}%{\color{red}[Laurent: {#1}]}}
\long\def\jump#1\finjump{}
\newtheorem{theorem}{Theorem}
\newtheorem{lemma}{Lemma}
\newtheorem{claim}{\emph{Claim}}
\newenvironment{proof}{\noindent\textbf{Proof.}}{{}\hfill$\Box$\\}
\newcommand{\paren}[1]{\left({#1}\right)}
\newcommand{\pfrac}[2]{\left(\frac{#1}{#2}\right)}
\newcommand{\set}[1]{\left\{{#1}\right\}}
\newcommand{\eps}{\varepsilon}
\let\mun=\mu % nouveau mu
\begin{document}
\makeRR   % cas d'un rapport de recherche
%% \makeRT % cas d'un rapport technique.
%% a partir d'ici, chacun fait comme il le souhaite

\section{Introduction}

\subsection{Background}

The quest for scalability has yield a tremendous amount of work in
the field of distributed systems in the last decade. Most recently,
the peer-to-peer community has grown up on the extreme model where
small capacity entities collaborate to form a system
whose overall capacity  grows proportionally to its size. 
Historically, first peer-to-peer systems were devoted to
collaborative storage (see, e.g., \cite{saroiu,gnutella,patarin06peer}).
The academic community has proposed numerous distributed solutions
to index the contents stored in a such a system. Most prominently,
one can mention the numerous distributed hash table 
proposals (see, e.g.,
\cite{ratnasamy01scalable,rowstron01pastry,maymounkov02kademlia,stoica03chord}).
Extreme attention has then been paid to \emph{content
distribution}. There now exists efficient schemes for single file 
distribution~\cite{bittorrent}. Several proposals
were made to cooperatively distribute a stream of
data (see, e.g.,
\cite{splitstream,kostic03bullet,tran03zigzag,xu02peertopeer,pplive,prefixstream}).
The main difficulty in streaming is to obtain low delay and balanced
forwarding load.
Most recently, the problem of collaborative video-on-demand has
been addressed. It has mainly been studied under the \emph{single
  video distribution} problem: how to collaboratively download a
video file and view it at the same 
time~\cite{huangliross,chengliuwhangjin,rodriguez,allen07deploying,pplive,p2cast,BASS,janardhanschulzrinne,p2vod}. 
This somehow combines both
file sharing and streaming difficulties. On the one hand,
participants are interested by different parts of the video. On
the other hand, an important design goal resides in 
achieving a small \emph{start-up delay}, i.e. the delay between
the request for the video and the start of playback.

Most of these solutions rely on a central server for providing
the primary copy of a video to
the set of entities collaboratively viewing it.
Following the pioneering idea of Suh et al.~\cite{pushtopeer},
we propose to explore the conditions for achieving 
fully distributed scalable video-on-demand systems. 
One important goal is then to enable a large 
\emph{distributed catalog}, i.e. a large number of 
distinct primary video copies distributively stored. 
We thus consider the entities 
storing the primary copies of the videos as part of the
video-on-demand system. This model can encompass various
architectures like a centralized system with download-only clients,
a peer-assisted server as assumed in many proposed solutions,
a distributed server with download-only clients or a fully
distributed system as proposed in~\cite{pushtopeer}.
These scenarios are illustrated by Figure~\ref{fig:exemples}.
The fully distributed architecture is mainly motivated by
the existence of set-top boxes placed directly in user homes
by Internet service providers. As these boxes may combine 
both storage and networking capacities, they become an
interesting target for building a low cost distributed
video-on-demand system that would be an alternative to more
centralized systems. % LV deja mentionee plus haut (Figure~\ref{fig:exemples}).

\begin{figure}[t]
\begin{center}
\subfigure[Box description]{\includegraphics[height=3cm]{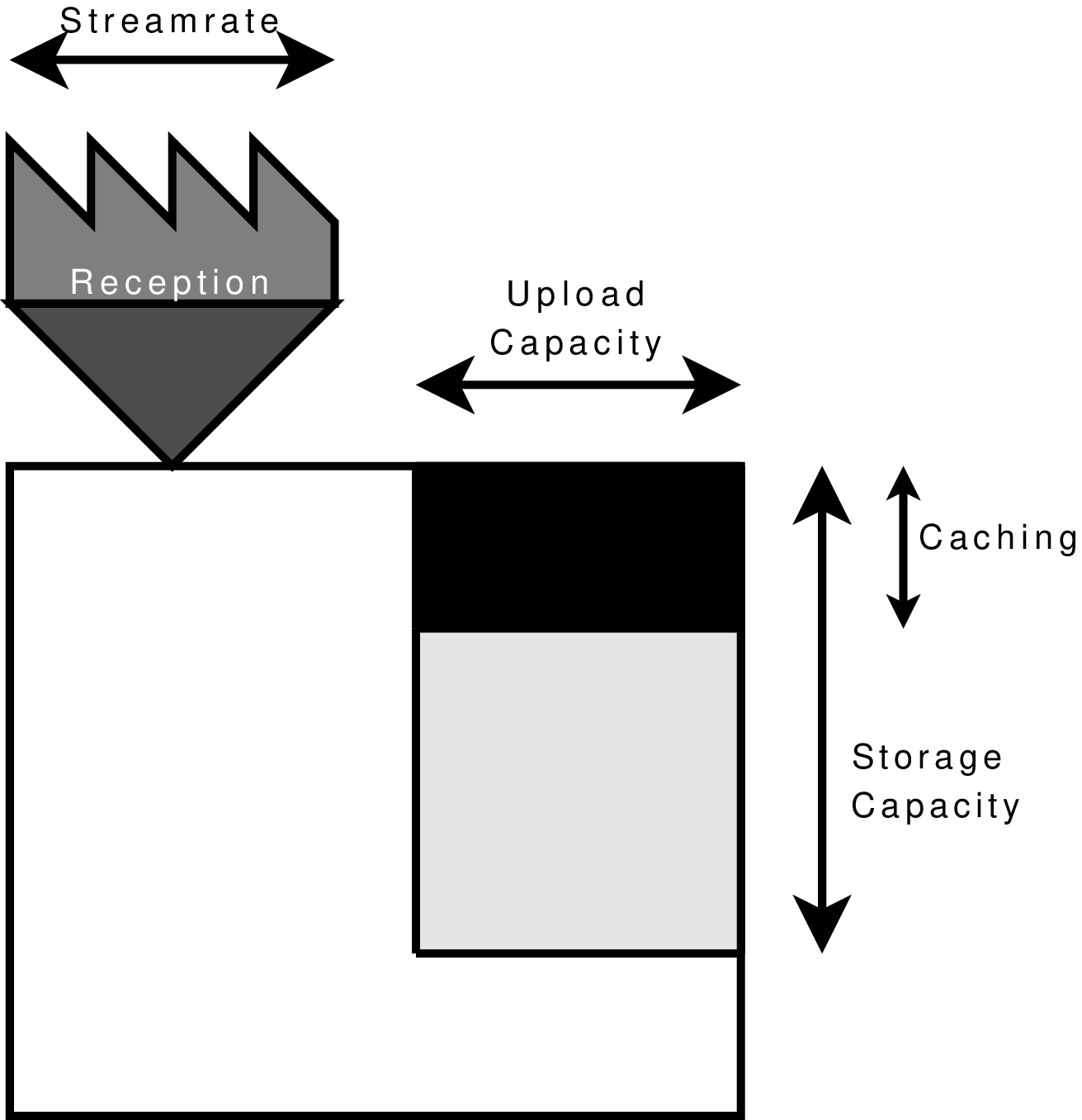} \label{fig:box}}
\subfigure[Fully centralized]{\includegraphics[height=3.5cm]{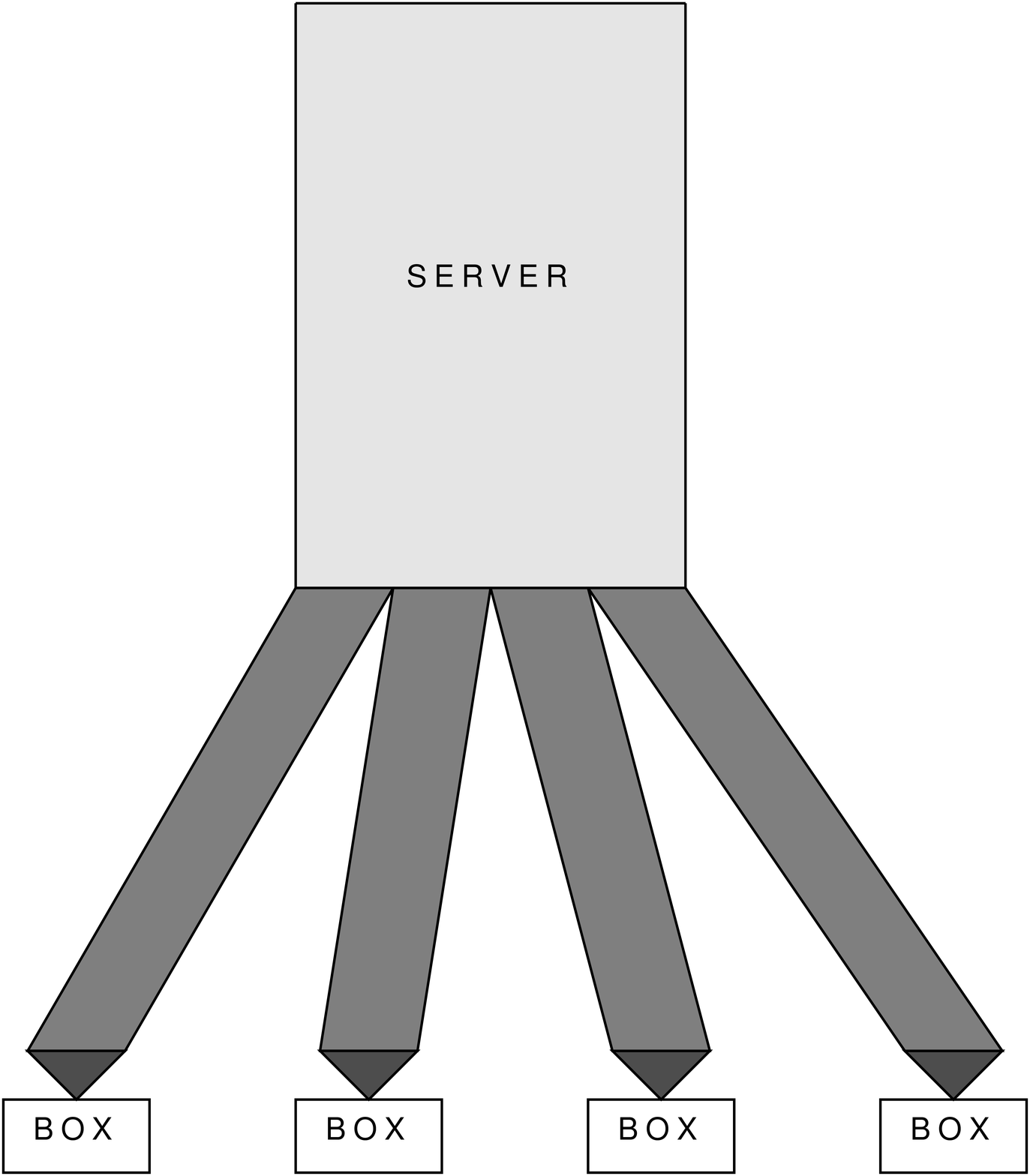} \label{fig:server}}
\subfigure[Cached servers]{\includegraphics[height=3.5cm]{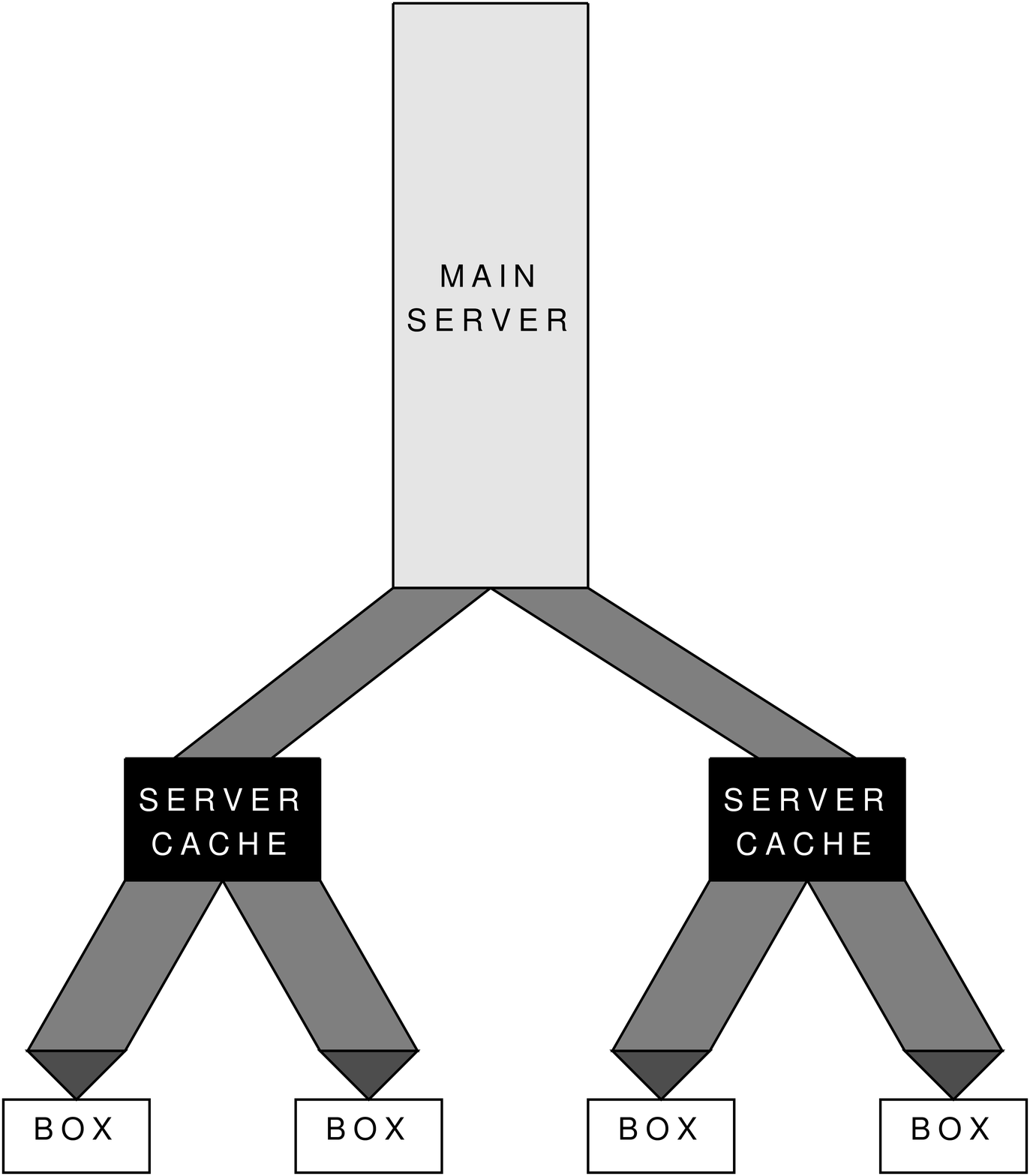} \label{fig:server_cdn}}
\subfigure[Peer-assisted]{\includegraphics[height=3.5cm]{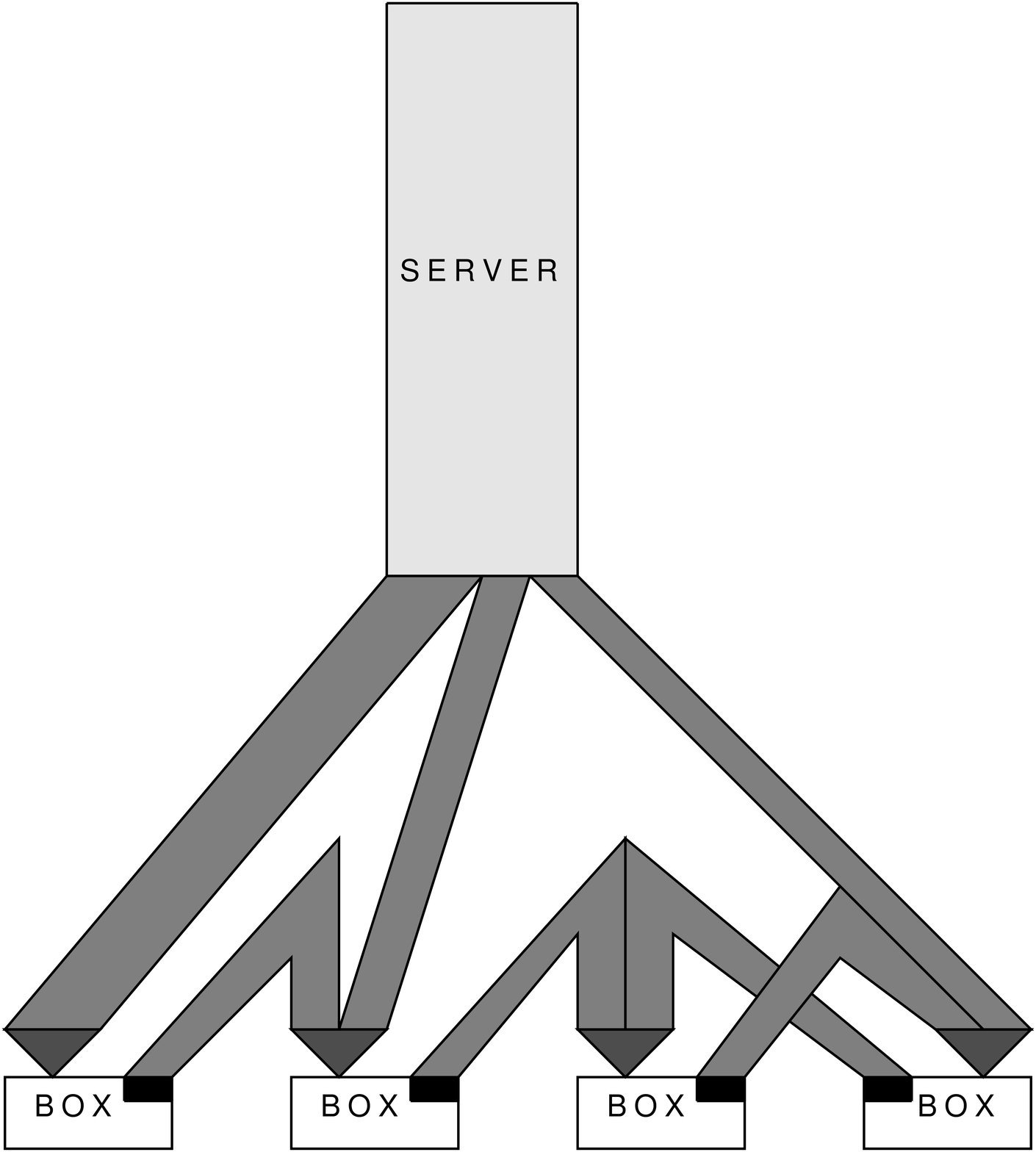} \label{fig:server_assisted}}
\subfigure[Fully distributed]{\includegraphics[height=3.5cm]{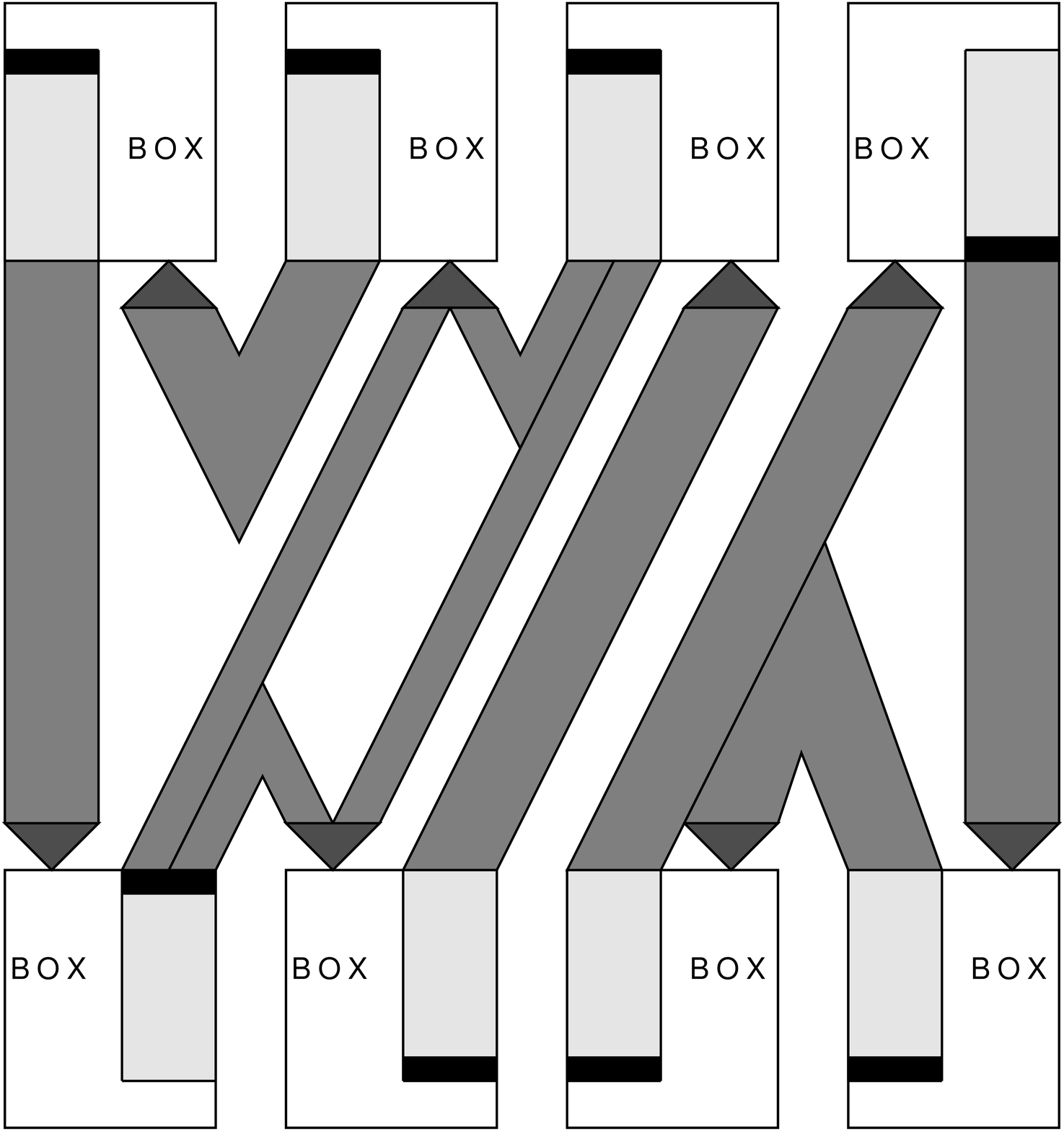} \label{fig:pushtopeer}}
\end{center}
\caption{Generic box description, and possible Video-on-Demand architectures}
\label{fig:exemples}
\end{figure}

% \begin{figure*}[t]
% \begin{center}
% \subfigure[Scenario \#1 (blocked)]{\includegraphics[width=.3\textwidth]{../figures/cexemple2} \label{fig:cex2}}
% \subfigure[Scenario \#1 (unblocked)]{\includegraphics[width=.25\textwidth]{../figures/cexemple3} \label{fig:cex3}}
% \subfigure[Scenario \#2]{\includegraphics[width=.33\textwidth]{../figures/cexemple1} \label{fig:cex1}}
% \end{center}
% \caption{Basic scenarios where VoD is compromised although storage and bandwidth are OK.}
% \label{fig:cexemple}
% \end{figure*}

\subsection{Related Work}

A significant amount of work has been done on \emph{peer-assisted} 
video-on-demand,
where there is still a server (or a server farm) which stores the
whole catalog. Annapureddy \emph{et al.}~\cite{rodriguez}
investigate the distribution (on-demand) of a single video. They
propose an algorithm that uses a combination of network
coding, segment scheduling and overlay management in order to handle
high streamrates and slow start-up delays even under
\emph{flashcrowds} scenarios. This follows an approach similar
to \cite{pplive} consisting in grouping
viewers of the same segment of the video together. 
Adaptations of the BitTorrent protocol to the single video
distribution are proposed in \cite{janardhanschulzrinne,BASS}.
 Cheng \& al. propose \cite{chengliuwhangjin} connections
to nodes at different position in the video to enable
VCR-like features (seeking, fast-forwarding, \ldots).
A thorough analysis of single video distribution under Poisson
arrival is made in~\cite{huangliross}, strategies
for pre-fetching of future content are simulated against real traces.
 % The diffusion of multiple videos has
% been proposed in \cite{huangliross,chengliuwhangjin}. 
 % In
% \cite{huangliross}, two distinct approaches are proposed. The single
% video approach assumes that peers only collaborate to the video they
% are watching. It is therefore very similar to the single video
% distribution proposed in \cite{rodriguez}. The multiple video
% approach, where a peer is allowed to redistribute a video previously
% watched, is closer to our model. However, the existence of a server
% is still mandatory.
% LV: pb je n'ai pas la meme lecture
Caching strategies are tested against real traces
in~\cite{allen07deploying}.
It is proposed in \cite{janardhanschulzrinne} to use a distributed
hash table to index videos cached by each node. However, there
is no guarantee that the videos stay in cache.
All these solutions rely on a centralized server for feeding the system with 
primary copies of videos.

To the best of our knowledge, only a few attempts have been made so
far to investigate the possibility of a server-free video-on-demand
architecture. 
 %All assume that the videos are proactively allocated
%to participating peers (like set-top boxes), and that the VoD
%service itself only relies on those peers.
% In~\cite{janardhanschulzrinne}, the main concern is the quality of
% the playback and the design of VCR-like functionality (seeking,
% fast-forwarding, \ldots) and the catalog size is not considered. 
% LV: c'est plutot chengliuwhangjin ca
Suh \emph{et al.} proposed the Push-to-Peer
scheme~\cite{pushtopeer} where the primary copies of the catalog
are pushed on set-top boxes that are used for video-on-demand.
The paper addresses the problem of fully distributing the system
(including the storage of primary copies of videos), but
scalability of the catalog is not a concern. Indeed,
a constant size catalog is achieved: each box stores a portion 
of each video. A code-based scheme is combined to a
window slicing of the videos and a pre-fetching of
every video. The paper is mainly dedicated to a complex analysis
of queuing models to show how low start-up delay and sufficiently
fast download of videos can be achieved.
The system is tailored for boxes with upload capacity lower than
playback rate. 
As we will see, this is a reason why scalable catalog cannot be
achieved in this setting.

Finally, in a preliminary work~\cite{iptps}, we begun
to analyze the conditions for catalog scalability. This work
mainly focuses on the problem of serving pairwise distinct videos
with a distributed system with homogeneous capacities
and no node failure.
Most notably, an upper bound of $n+O(1)$ is shown for catalog size
when upload is too scarce. A distributed video-on-demand is sketched
based on pairwise distinct requests and using any existing single
video distribution algorithm for handling multiply requested videos.
We extend much further this work to multiple requests, heterogeneous
case and node churn scenarios. We can now provide an upper bound of
$o(n)$ for catalog size when upload is scarce and multiple
requests are allowed. Secondly, we prove that the maximum flow
technique proposed for pairwise distinct requests can be extended
to answer any demand with possible multiplicity. This requires a
much more involved proof. Additionally, we give insight on
heterogeneous systems where nodes may have different capacities one
from another. Finally, we propose a distributed algorithm combining
both primary video copy distribution and replication of multiply
requested videos. 
Let us now give more details about the contributions of the present paper.

\subsection{Contribution}

This paper mainly proposes a model for studying the conditions
that enable scalable video-on-demand. Most importantly, we focus
on scalable catalog size and scalable communication schemes.
Our approach consists in first formulating  necessary requirements
for scalability and then try to design algorithms based on these
minimal assumptions. We call \emph{boxes} the entities forming the system.
Most notably,
we require that a box downloads a video using
a limited number $c$ of connections. This is a classical assumption
for having a scalable communication maintenance cost in an overlay
network. Note that efficient
$n$-node overlay network proposals usually try to achieve $c=O(\log
n)$. Equivalently, we assume that video data and video stream cannot
be divided into infinitely small units. With at most $c$
connections, a single connection should have rate at least
$\frac{1}{c}$ where $1$ corresponds the
normalized playback rate of the video.
Similarly, as connections have to remain steady
during long period of times with regard to start-up delay $t_S$,
a box should store portions of video data of size at
least $\frac{c}{t_S}$. This assumptions of minimal unit of data
or minimal connection rate provided by a box of the system are
particularly natural when one faces the problem of distributing video data
on several entities: one have to define some elementary chunk size
and distribute one or more of them per entity.

We first show that these discrete nature assumptions on connection
rates and chunk size give raise to an upload bandwidth
threshold. If the \emph{average upload} $u$ is no more than 1,
scalable catalog size cannot be achieved, a minimal average upload
of $1+\frac{1}{c}$ is thus required. Theorem~\ref{th:scarce}
states this as soon as $c=O(n^\eps)$ for any $\eps<\frac{1}{2}$
(e.g., $c$ is constant or bounded by a poly-logarithmic function of $n$).
Moreover, a distributed
video-on-demand system cannot achieve scalable catalog size
if the number of arrivals for a given video increases too
rapidly. We call \emph{swarm} of a video
the set of boxes playing it. If the swarm of a video can increase by
a multiplicative
factor $\mun > 1$ during a period equivalent to start-up delay
$t_S$, then it is necessary to have upload $u\ge \mun$ to replicate
sufficiently quickly the video data (see Theorem~\ref{th:swarmgrowth}).
These lower bounds on $u$ mainly rely on the assumption
that with large catalog size, some video must be replicated on a
limited number of boxes. (This assumption may be deduced from our
bound $c$ on the number of connections or may be taken for itself).

On the other hand, we give algorithms for enabling scalable
video-on-demand. We model the algorithmic part of a
video-on-demand system with two algorithms: a \emph{video allocation}
algorithm is responsible for placing video data on boxes, and
a \emph{scheduling algorithm} is responsible for managing
video requests proposed by an adversary, i.e. propose connections
for each box to download its desired video.
We build two scheduling algorithms based on random allocation of
video data.
Let us first remark that is not possible to
resist node failures if some video has its data on a limited
number of boxes: an adversary can place node failure events on
these boxes and then
request the video. We thus propose a first scheduler
under the assumption that no node fails and that
we meet the conditions $u\ge\max\set{1+\frac{1}{c},\mun}$
and $c\ge 2$. The problem of finding suitable connections
for downloading all videos reduce to a maximum flow problem
for a given set of requests and a given allocation of videos.
We thus propose a centralized scheduler running a maximum flow algorithm. 
If a centralized tracker for orchestrating connections has already
been proposed in several peer-to-peer architectures, it
is not clear whether 
this maximum flow computation could be made in a scalable way.
The benefit of 
this algorithm is thus mainly theoretical. It allows to understand the
nature of the problem. 
Theorem~\ref{th:expander} states that a random allocation enables
a catalog of size $\Omega(n)$ and allows
to manage any infinite sequence of adversarial requests with high
probability (as long as the adversary cannot propose node failures).
The problem of scalable video-on-demand can thus be solved with
optimal upload capacity in theory.
Interestingly, this scheme  allows to show that the best catalog 
size is obtained when the storage capacity of boxes is proportional
to their upload capacity.
% rien

Additionally, we propose a
randomized distributed scheduler based on priority to
\emph{playback caching}, i.e. relying on the fact that boxes playing
a video can redistribute it. Giving priority to such connections
allows to be resilient to exponential swarm growth. We show that
with the random allocation of $\Omega(n/\log n)$ videos in a system where
average storage capacity is $d=\Omega(\log n/c)$ per box,
this scheduler
can manage $O(n)$ realistic adversarial events with high probability
under the assumption that
$u\ge \mun+\frac{1}{c}$ and the adversary is not aware of
the scheduler and allocation algorithm choices 
(see Theorem~\ref{th:realistic}).
Interestingly, our use of playback caching allows to build
disjoint forwarding trees for video data in a way similar to
Splitstream~\cite{splitstream}. The main difference is that
relaying nodes buffer data before forwarding it and tree levels are
ordered according to the playing position in the video.

\smallskip

The paper is organized as follows.  Section~\ref{sec:requirements}
exposes the requirements that are needed for the catalog to be
scalable. Section~\ref{sec:strong} investigates the worst case
analysis of the problem with no failures; while
Section~\ref{sec:realistic} considers more realistic conditions.
Then Section~\ref{sec:simus} proposes to confirm the results of
previous sections by the dint of simulations. Some proofs are in
given in appendix due to space limitations.  We now introduce our
model for video-on-demand systems and the notations used throughout
this paper.

\section{Model}
\label{sec:model}

\begin{table}
\begin{tabular}{|l|p{.9\textwidth}|}
\hline \small
$n$ &\small Number of boxes for serving videos.\\ \small
$m$ &\small Number of videos stored in the system (catalog size).\\ \small
$d_i$ &\small Storage capacity of box $i$ (in number of videos).\\ \small
$d$ &\small Average storage capacity of boxes.\\ \small
$k$ &\small Number of duplicates copies of a video with random
allocation ($k\approx nd/m$)\\ \small
$u_i$ &\small Upload capacity of box $i$ (in number of full video streams).\\ \small
$u$ &\small Average upload capacity of boxes.\\ \small
%$b$ &\small Upload provisioning w.r.t. requests ($b=\frac{un}{r}$).\\ \small
$c$ &\small Maximum number of connections for downloading a video.\\ \small
%        for uploading.\\ \small
$s$ &\small Number of stripes of videos (a video can be viewed by
        downloading its $s$ stripes simultaneously).\\ \small
%$q$ &\small Number of distinct videos in a request ($q\le r$).\\ \small% ($q\le r \le n$) \\ \small
$a$ &\small Minimum ratio of active boxes in an homogeneous system.\\ \small
$t_S$ &\small start-up delay: maximum delay to start playing a video.\\ \small
$v_S$ &\small Maximum number of arrivals during $t_S$ for a video
  not being played.\\ \small
% LV : vraiment besoin ici ?
% $w(v)$  &\small swarm size: number of boxes viewing video $v$.\\ \small
$\mun$ &\small Bound on swarm growth: if a swarm has size $p$ at time
  $t$, it has size less than $\mun p$ at time $t+t_S$.\\
%% $\ell,\alpha$ & Bound on arrival rate: at most $\ell\alpha^{(-t/t_S)}$
%%  during a period of time $t$.\\ \small
%% plus besoin d'exponential global -- c'etait bien la peine que 
% je change tous mes alpha ren mu -- FdM
% INUTILE : $\alpha$ & MINIMAL ratio of  boxes NOT in startup phase.\\
\hline 
\end{tabular}
\caption{Key parameters}
\label{tab:parameters}
\end{table}

We first introduce the key concepts of video-on-demand systems
and discuss the associated parameters. We first describe
the nodes (often called boxes) of the system, then detail how they may
connect to each other to exchange data. We then explain how we decompose
the algorithmic part of the system and describe adversary models for testing
our algorithms.

\paragraph{Video system.}
We consider a set of $n$ boxes used to serve videos among themselves.
Box $i$ has storage capacity
of $d_i$ videos and upload capacity equivalent to $u_i$ video streams. For
instance if $u_i=1$, box $i$ can upload exactly one stream (we
suppose all videos are encoded at the same bitrate, normalized at $1$). 
Such a system will be called an $(n,u,d)$-\emph{video system}
where $u=\frac{1}{n}\sum_{i=1}^n u_i$ is the average upload capacity
and $d=\frac{1}{n}\sum_{i=1}^n d_i$ is the average storage capacity.
A system is \emph{homogeneous} when $u_i=u$ and $d_i=d$ for all $i$.
Otherwise, we say it is \emph{heterogeneous}. The special case when storage capacities
are proportional to upload capacities (i.e. $d_i=\frac{d}{u}u_i$
for all $i$) is called
\emph{proportionally heterogeneous}.

The box activity is defined as a \emph{state}. Box $i$ is
\emph{active} when it can achieve a stable upload capacity no less than $u_i$
or \emph{inactive} otherwise (e.g. when it is under failure or
turned off by user).  We suppose that the ratio $a$ of active boxes
remains roughly constant.  We assume that the nodes with higher capacity are not more prone to failure than the other nodes, so the average upload capacity of active boxes remains larger than $u$.
An active box may be \emph{playing} when
it downloads a video or \emph{idle} otherwise.
The set of boxes playing the same video $v$ is called \emph{swarm}.  Node churn occurs as
sequence of \emph{events} consisting in changing the state of a box.
\emph{Swarm churn} designates the events concerning a given swarm.
We will see in Section~\ref{sec:arrivals} that scalability cannot be
achieved when a swarm grows too rapidly. We thus assume a bounded
\emph{growth} factor $\mun$: during a period of time $t_S$ ($t_S$ is defined below), the
size of a swarm is multiplied by a factor  $\mun$ at most.
More
precisely, and to remove any quantification issues, we assume that the number of events for a given swarm and
a given period of time $t$ is at most $\mun^{t/t_S}$.  (For convenience, we
aggregate the various types of swarm churn within the same bound).

\paragraph{Connections.}
We assume that finding, establishing and setting up
a small buffer for starting video playback takes time. 
We call \emph{start-up delay} the maximal duration $t_S$  for a box
to connect to other boxes and begin playback.
We consider that the number $c_n$ of
connections for downloading a video is bounded by some constant $c$.
The reason is that with constant swarm churn rate, 
a box will have to change $\Omega(c_n)$ 
connections per unit of time. 
As changing a connection has some latency $\Omega(t_S)$,
this number should remain bounded or grow very slowly with
$n$. In connection with this assumption, we suppose that the
data of video cannot be split in infinitely small pieces.
We thus consider that a connection has minimal rate $\frac{1}{c}$
(this is obviously the case when connections rates are equally
balanced, and it can be modeled by aggregating unitary connections
otherwise). Therefore the minimal piece of video data stored on a box is
$\Omega\paren{\frac{1}{c}}$
(a trivial lower bound of $\frac{t_S}{c}$ follows from previous
assumptions).

A peer-to-peer video system without any external video sourcing
relies on the possibility to replicate a video as it becomes 
more popular
and the number of requests for the video increases.
The most straightforward way to do this is to cache in each box the
video it is currently playing, which is natural if we want to provide some VCR functionalities. We call \emph{Playback caching} 
this facility: boxes of a swarm can serve as a relay for for the boxes
viewing a former part of the video. Note, that in order to bring some flexibility in the swarm, the video can be split into time windows, thus allowing to avoid linear viewing. Time windowing also allows to reduce the problem to the case where all videos have
approximately the same duration.

\paragraph{Video data manipulations.}
We consider that all videos have
same playback rate, same size and same duration
(all three equal to $1$ as they are taken as reference for
expressing quantities). 
% We could manage various playback rates,
% sizes or durations
% similarly to how we manage heterogeneity of box capacities. 
% For the sake of clarity, we do not go in such details.  
%
To enable multi-source upload of a video, each video may be divided in
$s$ equal size \emph{stripes} using some balanced encoding scheme.
The video can then be viewed by downloading simultaneously the
$s$ stripes at rate $1/s$.
A very simple way of achieving stripping consists in
splitting the video file in a sequence of small packets. 
Stripe $i$ is then made of the packets with number equal to
$i$ modulo $s$.
Note that our connection number limitation imposes $s\le c$.
 % -- Cutable:
There are two main reasons for using stripes:
it allows to build internal-node-disjoint trees as discussed in
Section~\ref{sec:realistic} and it let a box upload sub-streams of
rate $\frac{1}{s}$ to fully use its upload capacity.
Stripes may also enable redundancy through correcting codes
at the cost of some upload overhead: downloading 
$(1-\eps)s$ stripes is then sufficient to decode the full video
stream (e.g. using LT-codes~\cite{ltcodes} or rateless
encoding~\cite{rateless}).
For the sake of simplicity, we assume that
$s$ can be large enough to consider all $u_i s$ and
$d_i s$ as integrals.
As mentioned previously, a video can be distributed among
several boxes by splitting it according to time
windows. However, considering all the time windows of all videos
being played at given time, we are back to the same problem
fundamentally. For that reason, we do not develop time windowing.
% --

\paragraph{Video scheme.}
A \emph{video allocation} algorithm is responsible for placing
primary copies of each video in the system respecting
storage capacity constraints of boxes. The most simple
scheme consists in storing them
statically: video data may be replicated but primary
copies of videos are static.
Video allocation only changes when new boxes are added to the system
or when the catalog is updated.
For instance, re-allocating primary copies under node churn would
not be practical when live connections consume most of the upload
capacity of the system.
We assume that the catalog renewal is made at a much larger
time scale. Its size and storage allocation are thus considered
fixed during a period of several
playback times. 
We may assume that the catalog remains the same
during such periods.
Of course as the system evolves over a long period of time,
some videos are added or removed.
The \emph{catalog size} is the number of distinct videos allocated.

When a box state changes, a \emph{scheduling algorithm} decides how to update 
the connections of playing boxes so that their video is downloaded
at rate greater than 1 and all box upload capacities are respected.
For our theoretical bounds (Section~\ref{sec:strong}), we use a centralized scheduler
that has full knowledge of the system.
For practical algorithms ( Section~\ref{sec:realistic}), we consider
distributed scheduling algorithms:
each time a box changes it state, it runs the scheduling algorithm
on its own. 
The scheduling algorithm succeeds if it can establish connections
to download the full video stream in time less than $t_S$.

We call \emph{video scheme} a combination of an allocation scheme
and a scheduling algorithm.
We say that a video scheme \emph{achieves} catalog size $m$ if the
allocation scheme can store $m$ videos in the system so that the
scheduling algorithm succeeds in handling all requests of an
adversary.
The \emph{adversary} knows the list of videos
in the catalog and proposes any
sequence of node state changes that respects our model assumptions.
In its weaker form, it is not aware of  the decisions made by 
the allocation and scheduling algorithms. This is a realistic assumption 
as there is no reason for user requests to be correlated
to something the users of the system are not aware of.
 %(Of course it knows which box is playing what since it decides this). 
Worst case analysis is obtained with the \emph{strong adversary}
which is the most powerful adversary possible.
It is additionally aware of the choices made
by the allocation and scheduling algorithms. In particular, it
knows which boxes contain replicas of a given video. If not specified,
the adversary is not strong.

\section{Necessary Conditions for Catalog Scalability}
\label{sec:requirements}

Let us first give some trivial requirements. 
The total upload is at most $un$ and, 
as all active boxes may be playing, the total download 
capacity needed may be $n$
so we trivially 
deduce the lower bound $u\ge 1$.
As the total storage
space of active boxes can be as low as $adn$ (assuming that
average storage capacity of active boxes remains $d$), we have
$m\le adn$.

Let us first remark that if we release the constraint on bounded connectivity,
then ideal storage of $adn$ videos can theoretically be 
achieved in any proportionally heterogeneous
$(n,1,d)$-video system when $c=n$. 
As stated in the homogeneous
case~\cite{pushtopeer}, \emph{full stripping} can achieve this.
It consists in splitting each video in $n$ stripes, one per box.
Viewing a video then requires to connect to all other boxes.
This result can easily be generalized to the proportionally heterogeneous
case with node failures using correcting codes. 
Such scheme are unpractical  for large $n$ but give a
theoretical solution.

On the other hand, we show that some upload provisioning is
necessary in our more realistic model.
The main hypothesis implying these results is that some video
is replicated on $O\left(\frac{n}{m}\right)$ boxes at most, i.e.
$o(n)$ if catalog scales.
First note that as soon as a video spans at most $o(n)$ boxes, the
system cannot tolerate $n$ strong adversarial events.  Indeed,
the strong adversary can propose failure events on all boxes possessing
a given video and then propose a request with the video.

\subsection{Maximal Swarm Churn Rate}
\label{sec:arrivals}

We now state that arrival rate in a given swarm must be lower than
average upload.
This is our first non trivial lower bound on average upload.

\begin{theorem}
\label{th:swarmgrowth}
Any homogeneous
$(n,u,d)$-video system achieving catalog size $m$ 
and resilient to swarm growth $\mun$ 
satisfies $u\ge\max\set{2, \mun -O\pfrac{1}{m}}$
\end{theorem}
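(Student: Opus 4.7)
The plan is an adversarial supply-versus-demand argument targeted at a scarcely-replicated video. First I would invoke a pigeonhole step: since the total active storage is $adn$ and the catalog contains $m$ videos, some video $v$ has at most $k = O(n/m)$ primary copies. The strong adversary then focuses on $v$ and grows its swarm at the maximum permissible rate of a factor $\mun$ per $t_S$, starting from an empty swarm. After $O(\log_\mun n)$ such multiplications the swarm fills the system; the critical window to analyze is the final step, where the swarm expands from $p = n/\mun$ to $n$ members over an interval of length $t_S$.

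Next I would bound supply and demand for video $v$'s data during this final window. On the supply side, only boxes that hold (some of) $v$ can upload it, and at the start of the window these are at most the $p$ swarm members (via playback caching) plus the at most $k$ primary-copy holders lying outside the swarm, so the total upload rate dedicated to $v$ is bounded by $u(p+k)$. On the demand side, the $p$ existing watchers each consume data at the playback rate, while each of the $(\mun-1)p$ new arrivals must, within its $t_S$ startup window, download enough to fill its buffer and begin playing --- effectively one additional ``unit'' of video data per arrival. The total demand is therefore $\mun p$ units per $t_S$, and the necessary condition supply $\geq$ demand rearranges to
\[
 u \;\geq\; \frac{\mun p}{p + k} \;=\; \frac{\mun}{1 + k/p} \;=\; \mun - O(1/m)
\]
after plugging in $p = n/\mun$ and $k = O(n/m)$. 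The term $2$ in the outer $\max$ should follow from a separate, simpler observation: during startup a new arrival simultaneously fills its buffer at the playback rate and pulls its ongoing stream, which momentarily doubles its download demand on the system, and hence on some connection the average upload must dominate this value.

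The main obstacle in carrying this out rigorously is the demand count for the new arrivals. A naive bookkeeping (integrating uniformly over arrival times within the window) gives only an average of $t_S/2$ units per new arrival rather than the full $t_S$, and yields the weaker bound $(\mun+1)/2$. The cleanest fix is to analyze the longer interval $[t, t+2 t_S]$, so that every new arrival has completed its startup and moreover played at least one unit of video by the end --- the data-conservation identity then reads off cleanly, with the set of boxes eligible to serve $v$ in that longer window still bounded by $\mun p + k$, and the final bound $\mun - O(1/m)$ follows by the same rearrangement. One should also check that the $p$ swarm members actually cache the earliest parts of $v$ that new arrivals request first, but any shortfall here only removes boxes from the supply side and tightens the conclusion.
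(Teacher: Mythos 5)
There is a genuine gap, and it sits exactly where the paper's proof does its real work. Your supply bound for the final window assumes that the only boxes able to upload $v$ at the start of that window are the $p$ swarm members plus the $k$ primary-copy holders. But the scheduling/allocation system is an adversary's opponent, not a passive object: during all the earlier growth phases it may use spare upload to \emph{pro-actively replicate} $v$'s data onto boxes outside the swarm, so by the time the swarm reaches size $p=n/\mun$ the number of external copies could in principle be far larger than $k$, and the bound $u(p+k)$ on supply is unjustified. The paper's proof is organized precisely around this issue: it tracks the number $x_i$ of external replicas at time $it_S$ through the recurrence $v_S\mun^{i+1}+x_{i+1}\le v_S u\mun^i+(u-1)x_i$, deduces $x_{i+1}\le(u-1)x_i$ when $u\le\mun$, and only because $u<2$ makes $(u-1)^i$ decay can it conclude that after $i=\log_{\mun}n$ steps the external supply contributes only an $O(1/m)$ correction. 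Your single-window argument cannot be repaired without some such control on the accumulated external replication.

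This also explains why your treatment of the constant $2$ is off the mark. The $2$ in the theorem is not a separate buffer-filling observation about a newcomer momentarily downloading at twice the playback rate (that constrains download, not average upload, and does not yield $u\ge 2$); it is the threshold at which the replication recurrence stops decaying: if $u\ge 2$ the system could double its external copies every $t_S$ and outrun any swarm growth, so the bound $u\ge\mun-O(1/m)$ is only derived under the case hypothesis $u<2$. (Note in passing that the paper's own proof establishes the disjunction ``$u\ge 2$ or $u\ge\mun-O(1/m)$,'' i.e.\ a $\min$, and the $\max$ in the statement appears to be a typo; in any case your proposed route to the $2$ does not prove it.) Your secondary fix of integrating over a $2t_S$ window to handle arrival times is a reasonable rigor patch but does not touch the main problem.
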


For small start-up delay, a realistic value of $\mun$
would  certainly be less than $2$.
Scalable catalog size is then achievable for $u\ge \mun$ only.

\begin{proof}%[of Theorem~\ref{th:swarmgrowth}]
We consider a scenario where  boxes are viewing  different
videos, and all of them switch to the same video forming a swarm
with growth factor $\mun$.
The swarm of the video has thus size $v_S$ at time 0,
$v_S \mun$ at time $t_S$, $v_S \mun^2$ at time $2t_S$,
and more generally size $v_S \mun^i$ at time $it_S$.
We choose a video that is replicated at most
$k=O\paren{\frac{n}{m}}$ times in the system.
If this data is possessed by sufficiently many boxes,
it can be replicated $k$ times initially.
%by at most $k=O\paren{\frac{n}{m}}$ boxes originally.
Consider the number of times $x_i$ the data of the video 
is replicated outside the swarm at time $it_S$.
Suppose that all boxes possessing the video
either serve new arrivals or 
pro-actively replicate it with their remaining bandwidth.
We then have $v_S \mun^{i+1}+x_{i+1}\le v_S u\mun^i + (u-1)x_i$
as the video data must be received by
all boxes in the swarm and boxes outside the swarm that replicate it.
Suppose $u\le \mun$  (otherwise the proof is already over). %As $v_S<k$
We get $x_{i+1}\le (u-1)x_i$, and thus $x_{i}\le (u-1)^ik$
as $x_0\le k$. The former inequality thus gives
$u+\frac{k}{v_S}\pfrac{u-1}{\mun}^i\ge\mun$. 
%$i$ can not grow arbitrarily large: as $x_i\le n$ then $i\le \log_{\mun} n$.
If $u<2$, we obtain
for $i=\log_{\mun} n$ that
$u\ge\mun - \frac{k}{v_S n}=\mun - O\pfrac{1}{m}$.
\end{proof}

Additionally, we can prove that a strict upload of 1 is not
sufficient even under low pace arrivals. 

\subsection{Upload Capacity versus Catalog Size}
\label{sec:scarce}

We thus assume in this section that $c=O(n^\eps)$ for some
$\eps>0$.
This is for example the case 
when $c$ is a poly-log of $n$ as often assumed in overlay 
networks~\cite{ratnasamy01scalable,rowstron01pastry,stoica03chord}. 
(The rest of the paper assumes  a constant $c$).
With this bound, we can establish the following trade-off
between average upload capacity and achievable catalog size.

% maintained by a box is bounded by $c=o(n)$ and that these
% connections should remain steady during a period of time $t$
% significantly long. (A system with too many connections changing too
% frequently would have less resilience to failure and high overhead.)

\begin{theorem}
\label{th:scarce}
For any $\eps>0$,
an homogeneous $(n,u,d)$-video system with $u\le 1$ and $c=O(n^\eps)$
%\laurent{TODO: seulement $u\le \frac{1}{c-1}$}
that can
play any demand of $n$ videos in the no failure strong adversary model
%\laurent{TODO: randomized adversary (ac modele aleat de req)} 
has catalog size $m=O\paren{n^{1/2+\eps}}$.
\end{theorem}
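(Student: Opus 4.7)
The plan is to argue the contrapositive: suppose the allocation supports catalog size $m$ and derive $m = O(n^{1/2+\eps})$. The key observation is that under $u\le 1$ the total upload exactly matches the total download demand, so in any feasible schedule every active box expends its full $c$-slot upload budget and each demander's $c$ incoming slots carry only its demanded video. In particular, an edge from supplier $x$ to demander $y$ is feasible only when $v_{i(y)} \in V_x := V_x^{\mathrm{store}}\cup\{v_{i(x)}\}$, so scheduling feasibility reduces to the existence of a $c$-matching in the bipartite graph of compatible (supplier-slot, demander-slot) pairs.

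A first pigeonhole step on storage gives that since $\sum_v k_v = nd$, at least $m/2$ videos satisfy $k_v \le 2nd/m$; call this set $\mathcal R$ the rare videos. The strong adversary---knowing the allocation---now selects a subset $W\subseteq\mathcal R$ of $s$ rare videos together with $s$ demander boxes chosen outside $N_B(W):=\bigcup_{v\in W}\{x:v\in V_x^{\mathrm{store}}\}$, assigning the videos of $W$ to these demanders in a pairwise distinct way and filling the remaining demands arbitrarily. Since the assignment within the attacked subset is pairwise distinct, playback caching offers no cross-help between these demanders, and the only compatible suppliers for such a $y$ are the $k_{v_{i(y)}}$ boxes in $N_B(v_{i(y)})\subseteq N_B(W)$. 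Hall's condition on the induced $c$-matching therefore requires $|N_B(W)|\ge s$, so the adversary wins as soon as it finds $W$ with $|N_B(W)|<s$.

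The combinatorial heart of the proof is then to show that when $m$ is too large any allocation must contain such a concentrated $W$. I would run a Cauchy--Schwarz / double-counting estimate on the storage bipartite graph: writing $P := \sum_v \binom{k_v}{2} = n\binom{d}{2}$ for the number of ordered video pairs co-stored in the same box and combining this with the rare-video bound $k_v\le 2nd/m$, one extracts a subset of $s=\Theta(\sqrt{m})$ rare videos whose combined neighborhood has size at most $O(s)$. The parameter $c$ enters at the final step, because each supplier box contributes $c$ out-slots that can serve up to $c$ distinct demanders, relaxing the effective Hall threshold by a factor of $c$; this translates into the inequality $m^2 = O(nc^2)$, i.e.\ $m=O(\sqrt{n}\cdot c) = O(n^{1/2+\eps})$.

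The main obstacle is precisely this last combinatorial estimate: a naive averaging argument only yields the weaker bound $m=O(n)$, matching the preliminary result of~\cite{iptps} for pairwise distinct requests. Extracting the $\sqrt{n}$ factor requires simultaneously exploiting the quadratic co-storage statistic and the $c$-slot supplier fan-out; losing control of either factor causes the exponent to degrade back toward $1$.
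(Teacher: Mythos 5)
Your proposal has a genuine gap, and the route you chose cannot be repaired as stated. The ``combinatorial heart'' you defer to --- that any allocation with $m=\omega(\sqrt{n}\,c)$ must contain a set $W$ of rare videos whose combined storage neighborhood $N_B(W)$ violates the ($c$-relaxed) Hall condition --- is not merely the missing step, it is false. A random allocation with $k\ge 2$ copies per stripe is an expander with high probability even for $m=\Theta(n)$ (this is exactly what Theorem~\ref{th:expander} exploits to achieve linear catalog size once $u\ge 1+\frac{1}{c}$), so no such concentrated $W$ exists, and a static Hall-violation attack on pairwise-distinct requests cannot yield anything beyond the trivial storage bound. Your own observation that naive averaging only gives $m=O(n)$ is symptomatic of this; also note that the quantity $\sum_v\binom{k_v}{2}$ you propose to control counts pairs of \emph{boxes} sharing a video, not pairs of videos co-stored in a box (the latter is $\sum_b\binom{d_b}{2}=n\binom{d}{2}$), and neither statistic forces concentration. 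The deeper issue is your opening reduction: collapsing scheduling feasibility to a one-shot $c$-matching in a compatibility graph discards the feature that actually creates the $\sqrt{n}$ threshold, namely that at $u\le 1$ the upload budget is exactly saturated and a flood of requests for one video starves everything else.

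The paper's proof uses a single rare video and a dynamic flow argument. Since $m$ is large, some video $v$ has its data on a set $E$ of at most $p\le\frac{1}{2}\sqrt{n}$ boxes (this is where the minimum chunk size $\frac{t_S}{c}$, hence the factor $c$, enters). The adversary then has every box $b_1,\dots,b_q$ outside $E$ request $v$ in sequence, each newcomer being fed by $E$ or by earlier swarm members via playback caching, while the boxes of $E$ request videos stored nowhere in $E\cup\{b_q\}$. A conservation count on the upload leaving and circulating inside $F=\overline{E}$ shows that at most $p-1$ units of upload remain to serve $E$, so all but $p-1$ catalog videos must have data inside the $p+1$ boxes of $E\cup\{b_q\}$, each holding at most $\frac{dc}{t_S}$ distinct videos; this gives $m\le\frac{dc}{t_S}(p+1)+p-1=O(\sqrt{n}\,c)=O(n^{1/2+\eps})$. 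If you want to salvage your write-up, keep your pigeonhole step that produces a rarely-replicated video, but replace the expansion/Hall machinery by this single-video flooding and flow-saturation argument.
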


The above result states that a video system with scarce capacity poorly
scales with $n$. As it is valid in the no failure strong adversary
model, it remains valid in the strong adversarial model.
With our discrete vision of connections, it implies that a minimal
upload $u\ge 1+\frac{1}{c}$ is necessary for scalability.

\begin{proof}
Suppose there exists $\eps>0$ with $\eps<\frac{1}{2}$ such that $c< n^\eps$.
As discussed in Section~\ref{sec:model}, we use our assumption
that a box stores 
no less than $\frac{t_S}{c}$ data of a given video.
%corresponding to a time window
%of duration $t$ of a video in a box for some constant $a>0$.
%\laurent{TODO: passer la preuve a $c<n^x$ et virer le $1/t$ dans la borne.}

Suppose by contradiction that there 
exists a video system with catalog size
$m>\frac{2d}{t_S}n^{1/2+\eps}>\frac{2dc}{t_S}\sqrt{n}$.
As the overall storage capacity is $dn$, there 
exists some video $v$ whose data is replicated at
most $\frac{dn}{m}\le \frac{t_S}{2c}\sqrt{n}$ times. As useful portion of
data of $v$ have size
at least $\frac{t_S}{c}$, the set $E$ of boxes storing data of $v$
has size at most $\frac{1}{2}\sqrt{n}$. Let
$F=\overline{E}$ be its complementary. Set
$p=|E|\le\frac{1}{2}\sqrt{n}$ and $q=|{F}|$.

Now consider the possible request
sequence where all boxes $b_1,\ldots,b_q$ 
of  ${F}$ successively begin to play $v$
while boxes of $E$ play videos 
not stored at all among boxes in $E\cup\set{b_q}$.
Box $b_i$ can  download $v$ from
$E_i=E\cup\{b_1,\ldots,b_{i-1}\}$. 
Boxes of $E$ can only download from $F'=F\setminus\set{b_q}$.

Suppose that data of $v$ flows from $E$ to $F'$ at rate $p'$
and from $E$ to $b_q$ at rate $p''$.
We have $p'+p''\le p$ since
the overall upload capacity of $E$ is $p$.
Data of $v$ flows internally to ${F'}$ at rate at least
$q-1-p'$. The remaining upload capacity to serve $E$ is thus
$p'-(1-p'')\le p-1$ as $E$ must additionally serve $b_q$ at rate $1-p''$. 
This implies that the number of videos not stored at
all on $E\cup\set{b_q}$ is at most $p-1$. (Otherwise, we have a request that
cannot be satisfied.)

As a box contains
data of $\frac{dc}{t_S}$ distinct videos  at most.
We thus deduce $m\le \frac{dc}{t_S} (p+1) + p-1
\le \frac{dc}{t_S}\sqrt{n}<\frac{d}{t_S}n^{1/2+\eps}$.
This is a contradiction and we deduce $m=O\paren{n^{1/2+\eps}}$.
\end{proof}

We deduce from the previous results that
$u\ge\max\set{1+\frac{1}{c},\mu}$ is a minimal requirement for
scalability. We now show that it is indeed sufficient.

\section{Strong Adversary Video Scheme}
\label{sec:strong}

We now propose  a video scheme
achieving catalog size $\Omega(n)$ in the no failure
strong adversary model for any video system
with average upload $u\ge \max\set{1+\frac{1}{c},\mun}$.
It is based on random allocation of video stripes using
$s=c$ stripes per video and uses a maximum flow scheduler.

% by the following theorem.

% \begin{theorem}
% \label{th:expander}\hspace{.5cm}
% Any proportionally heterogeneous $(n,u,d)$-video system
% with $u\ge 1+\frac{1}{c}$, can achieve
% catalog size $\Omega(dn/\log_{u}d)$ in the no failure strong
% adversarial model.
% \end{theorem}

% This result is quite strong as the multiset of requests may be chosen 
% by an adversary knowing the allocation graph.
% % (This choice includes
% % which boxes have reduced upload capacity due to single video schemes).
% It states that there exists a video allocation that an adversary
% cannot defeat.

% The benefit of this theorem is mainly theoretical as it requires 
% a centralized scheduler and a maximum flow computation each time
% a request arrives. We first introduce a scheme achieving
% these bounds for the homogeneous case. When then generalize to
% the heterogeneous case.

\subsection{Random Allocation}
\label{sec:randbound}

Random allocation consists in storing $k$ copies of each stripe
by choosing $k$ boxes uniformly at random. This approach was
proposed by Boufkhad \& al~\cite{iptps} using a purely random
graph with independent choices. This has the disadvantage
to unbalance the quantity of data stored in each box. 
We thus prefer to consider 
a regular bipartite graph  where all storage space is used on all
boxes.
%\laurent{intro:}
We could obtain the same bounds for the purely random graph.
Analysis is slightly more complicated in our case.

For the sake of simplicity, we assume $k=dn/m$ is an integer.
A regular random allocation consists in copying each stripe in $k$
boxes such that each box contains exactly $ds$ stripe copies.
We model this through a random permutation $\pi$ of the $kms$ stripe
copies into the $dns$ storage slots of the $n$ boxes together:
copy $i$ is stored in slot $\pi(i)$ (the $d_1s$ first slots fall into
the first box, the $d_2s$ next slots into the second box, and so on).
The best catalog size is obtained for the smallest possible value of
$k$. 

We call \emph{random allocation scheme} the video allocation
algorithm consisting in selecting uniformly at random
a permutation $\pi$ and in allocating videos according to $\pi$.

\subsection{Maximum Flow Scheduler}

We propose a connection scheduler relying on playback caching.
Each time a node state changes, a centralized tracker
considers the \emph{multiset of stripe requests}, 
i.e. the union of all the video
stripes being played (some stripes may be played multiple
times) and tries to match stripe requests against boxes so that  box
$i$
has degree at most $u_i s$.
We can model 
this problem as a flow computation in
the following bipartite graph between stripe requests and the boxes 
storing these stripes. An arc of capacity $1$
links every stripe request to all boxes where it is stored
(either through the static allocation scheme or through playback
caching). The scheduling algorithm consists in
running a maximal flow algorithm
to find a flow from stripe requests to boxes with the following
constraints: each request
has an outgoing flow of $1$ and such that box $i$ has incoming flow
of $u_is$ at most. 

We prove
that a random regular graph using $s\le c$ stripes
with $u\ge \max\set{1+\frac{1}{s},\mun}$
has the following property with high probability: 
for any multiset of  $n$ requests at most, 
a flow with the desired constraints exists. The proof consists
in proving that a random regular allocation graphs
has some expander property with high probability. 
A min-cut max-flow theorem allows to conclude and state the
following theorem.

\begin{theorem}
\label{th:expander}\hspace{.5cm}
Consider a proportionally heterogeneous $(n,u,d)$-video system
with $u\ge\max\set{1+\frac{1}{c},\mun}$ and $c\ge 2$.
Random regular allocation combined with the maximum flow scheduler
allows to achieve 
catalog size $\Omega(dn/\log_{u}d)$ and to manage successfully
any infinite sequence of strong adversarial events excepting node
failures with high probability.
\end{theorem}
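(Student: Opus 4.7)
The plan is to reduce the scheduling problem at any instant to a bipartite flow feasibility question and then verify this feasibility for every admissible request configuration by showing that the random regular allocation is an expander with high probability. First I would formalize the network: on one side put the stripe requests, one per stripe of each video currently being played; on the other side the boxes, with capacity $u_j s$ at box $j$. Add a unit-capacity arc from each stripe request to every box that holds that stripe through the static allocation or through playback caching, i.e., currently plays the same video. The scheduler succeeds iff this network admits a fractional flow saturating the sources, and by max-flow min-cut together with the integrality of bipartite $b$-matching this is equivalent to the Hall-type cut condition $\sum_{j\in N(R)} u_j s \ge |R|$ for every subset $R$ of stripe requests.

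Second, I would establish the expansion property of the random regular bipartite allocation graph by a first-moment argument. With $k = dn/m$ copies per stripe placed by a uniform permutation, the probability that a fixed stripe-set $S$ maps into a fixed box-set $T$ is bounded by a falling-factorial ratio of order $(|T|ds/(dns))^{k|S|}$. A union bound over $(S,T)$ pairs with $|T|$ just below $|S|/(us)$, using the binomial estimate $\binom{ms}{|S|} \le (ems/|S|)^{|S|}$ and similarly for $|T|$, is $o(1)$ as soon as $k \ge C\log_u d$ for a sufficient constant $C$. This yields the claimed catalog size $m = dn/k = \Omega(dn/\log_u d)$. In the proportionally heterogeneous case boxes are weighted by their upload capacity, and the same argument applies because the capacity-to-storage ratio $u_j/d_j = u/d$ is constant across boxes.

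Third, I would combine this expansion with playback caching and the two slack terms to verify the cut condition uniformly. For any request multiset $R$, decompose it by video. A popular video $v$ with current swarm $p_v$ contributes $p_v$ boxes to $N(R)$ directly through playback caching, so its capacity contribution is at least $u s \cdot p_v$; the condition $u \ge \mun$ absorbs the $\mun$-fold swarm growth during start-up, exactly as in the proof of Theorem~\ref{th:swarmgrowth}. A video with a small swarm relies on the static expansion, which delivers $|N(R)| \ge |R|/s$, and the slack $u \ge 1+1/c \ge 1+1/s$ converts this vertex expansion into the required capacity expansion $u s \cdot |N(R)| \ge |R|$. Summing the contributions from the two regimes gives the required inequality for every $R$.

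The main obstacle I expect is the clean decomposition of $R$ into popular and diffuse parts without double-counting boxes that serve a video both statically and through playback caching, and the uniform treatment of all possible $R$ through a single expansion inequality. Handling an infinite sequence of adversarial events is not itself a difficulty: the allocation $\pi$ is drawn once and its expansion property is static, so once we condition on the good event for $\pi$ the deterministic min-cut argument applies at every subsequent instant of the adversary's sequence.
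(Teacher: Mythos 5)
Your proposal follows essentially the same route as the paper: reduce feasibility at each instant to a Hall-type cut condition via min-cut/max-flow (the paper's Lemma~\ref{lem:hall}), bound the probability that the $k$ permutation-placed copies of a set of stripes all fall into a small box set by $(p/n)^{k|S|}$ (Lemma~\ref{lem:rand}), run a first-moment union bound over obstruction sets to get $k=O(\log_u d)$ and hence $m=\Omega(dn/\log_u d)$, use playback caching with a $1/\mun$ attenuation to cover duplicated requests, and exploit proportionality of storage to upload in the heterogeneous case (plus the Lemma~\ref{lem:reduction} grouping for poor boxes). The one obstacle you flag --- double-counting boxes in an additive popular/diffuse decomposition --- is precisely what the paper sidesteps by using a dichotomy instead of a sum: for a candidate obstruction multiset $S$ with $i_2$ duplicated requests, either $i_2/\mun\ge |S|/(us)$ and playback caching alone certifies the cut condition, or else a fraction $i_1>(1-1/(\alpha us))|S|$ of the requests are pairwise distinct and the union bound is applied only to the static copies of those distinct stripes (with the multiset count $M(i,i_1)$ absorbed into the exponential bound), so the two capacity sources are never added and no double-counting arises.
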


% A logarithmic  number of trials would allow to find such an ideal
% allocation graph surely. This allows to prove Theorem~\ref{th:expander}
% in the homogeneous case (as $s\le c$ we have $1+\frac{1}{s}\le 
% 1+\frac{1}{c}$).

The proof generalizes in a non trivial manner the proof
of~\cite{iptps} that assumes a purely random graph allocation, pairwise
distinct requests and homogeneous capacities.
Due to space limitations, the proof is given in Appendix~A.

\subsection{Heterogeneous Capacities}

%We show now
%in Section~\ref{proof:expander} 
%\fabien{heu, y'a rien la dessus en section 8... est-ce bien normal ?}
As discussed in Appendix~A,  in the case of heterogeneous
capacities, the proof requires
 %be generalized to proportional heterogeneous capacities when all
 %uploads are greater or equal to $\mun$ (follwing exponential growth
 %assumption) when 
the following balance condition.  For
all set $E$ of boxes with overall upload capacity $U_E=\sum_{b\in E}
u_b$ and overall download capacity $D_E=\sum_{b\in E}d_b$ we have for
some $u'\ge \mun+\frac{1}{s}$:
 % \fabien{j'ai corrige un $1/c$ en $1/s$ ici c'est OK ? En d'autres
%   termes les uploads sont des multiples de $1/s$ ou $1/c$ finalement ?
% SInon changer a plusieurs endroit dans cette section des $1/s$ en $1/c$ }
$$
\frac{U_E}{D_E}\ge  \frac{u'}{d}
$$
(The number of copies per stripe in the allocation graph is 
then $k=O(\log_{u'} d)$).

Note that $u' = u$ in the proportionally heterogeneous case
and that $u'\le u$ in general. Having storage capacity proportional
to upload capacity is thus the best situation to optimally benefit
from the box capacities.

In the general heterogeneous case, a possible random allocation
scheme consists in using only storage $d_b'=d\frac{u_b}{u''}$
 %\fabien{heu c'est pas e $d_b'=d\frac{\alpha u'}{u_b}$ ?}  
for each box $b$ for some $u''\ge u$ achieving best storage capacity.
If box upload capacities are within a constant ratio, this will
achieve a catalog size within a constant ratio of the balanced
scheme. 

\subsection{Poor Upload Capacity Boxes}

Special care has to be taken for
an heterogeneous $(n,u,d)$-video system where some boxes have upload
capacity smaller than $\mun$. We say that such boxes are \emph{poor}. 
The above connection scheduler may be
defeated by downloading the same video on a large set $E$ of such
poor boxes, as it may not support exponential growth. 
This comes from the fact that the storage space for the
video coming from playback caching may get larger than $U_E$.
The above condition on the balance between storage and upload 
is then violated by playback caching storage. 
% With constant
% allocation storage and increasing playback caching storage
% there may appear some set $E$ violating the balance condition.

The general heterogeneous case is reduced to the case where
uploads capacities are all greater or equal to $\mun$ 
thanks to the following lemma. (This is the last step of
the proof of Theorem~\ref{th:expander}).
Due to space limitations, the proof is given in Appendix~A.

\begin{lemma}
\label{lem:reduction}
Consider an $(n,u,d)$-video system $A$ with $n_P$ boxes of upload less
than $\mun$ having overall upload capacity $U_P$ and a video
allocation scheme with $s$ stripes satisfying $u\ge \mun$.
There exists an
$(n,u,d+\frac{n_P-U_P/\mun}{n})$-video system $B$ with same video
allocation and, for each box $b$, upload capacity $u_b'$ satisfying
$\mun\le u_b'\le u_b$, and same average upload $u$, that can emulate
any scheme of $A$ in the no node failure strong adversary model.
% the $n_R$ boxes
% having upload $\ge 1+\frac{1}{s}$ satisfies $U_R-n_R\ge n_L - U_L$.
\end{lemma}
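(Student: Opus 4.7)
The plan is to define $B$ by boosting every poor box in $A$ up to upload exactly $\mun$ and compensating by lowering the upload of sufficiently rich boxes, which is feasible because $u\ge\mun$ gives the aggregate slack $\sum_{b:u_b>\mun}(u_b-\mun)\ge n_P\mun-U_P$. Concretely, I would iteratively pair a poor box $p$ with a rich box $r$ satisfying $u_r>\mun$ and transfer $\delta=\min\{\mun-u_p,u_r-\mun\}$ of upload from $r$ to $p$, repeating until no poor box remains. This keeps the total upload $nu$ unchanged, hence the average still equal to $u$, sets every $u_b'\ge\mun$, and preserves $u_b'\le u_b$ on the rich side as required by the lemma.

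Next I would account for the extra storage. Each boosted box $p$ must now be able to serve $\mun-u_p$ more streams than in $A$, so in addition to its original content $p$ has to hold the stripes corresponding to the upload transferred from $r$. Since a box with upload $\mun$ can sustain up to $\mun$ simultaneous outgoing streams from a single stored stripe copy, the marginal storage needed on $p$ is of the order of the transferred upload divided by $\mun$. Summing all transfers over the poor boxes yields a total upload shift of $n_P\mun-U_P$ and consequently an aggregate storage overhead of $n_P-U_P/\mun$, which is exactly the additional allowance in the statement $(n,u,d+\frac{n_P-U_P/\mun}{n})$.

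Finally, the emulation is carried out connection by connection: given a scheme $(\pi,\sigma)$ of $A$ and any adversarial event sequence, every connection that $\sigma$ assigns to a box whose upload was not reduced is preserved verbatim in $B$; connections on a reduced rich box $r$ that exceed its new capacity $u_r'$ are rerouted to its paired boosted box $p$, which stores the required stripe thanks to the extra storage; and connections originally assigned by $\sigma$ to $p$ remain valid in $B$ since $\mun\ge u_p$. All per-box upload and connection-count constraints are then met by construction. I expect the main obstacle to be showing that the pairing and the additional stripe assignment can be fixed \emph{statically} inside $B$ (independently of the adversary) and that this static choice remains compatible with every allowed sequence of strong-adversary events without failures; once this static feasibility is established, the per-connection rerouting and the preservation of the invariants of Theorem~\ref{th:expander} follow in a uniform way.
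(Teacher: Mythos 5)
Your construction runs the capacity transfer in the wrong direction, and this is fatal. A poor box $p$ has \emph{physical} upload capacity $u_p<\mun$; it cannot be ``boosted'' to $\mun$ by bookkeeping, because in your emulation step you reroute to $p$ the connections that its paired rich box $r$ can no longer carry, i.e.\ you ask $p$ to upload up to $\mun s>u_p s$ stripes. No amount of extra \emph{storage} on $p$ can compensate for upload it does not have. The paper's reduction goes the other way: for each poor box $b$ it statically reserves $\mun s-u_b s$ upload slots on rich boxes (feasible exactly when $u\ge\mun$, which is where your slack computation belongs); these slots are used to \emph{relay} the $(1-\frac{u_b}{\mun})s$ stripes that $b$ cannot fetch and serve directly, and to serve up to $(\mun-1)(1-\frac{u_b}{\mun})s$ newcomers. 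Consequently the extra storage $\frac{s_b}{\mun s}$ lands on the \emph{rich} relays, which must cache the stripes they forward in place of $b$ --- not on the poor boxes. Your total $n_P-U_P/\mun$ is numerically the same but sits on boxes that cannot exploit it. Note also that in the paper's $B$ the reduced capacities $u_b'=u_b-\frac{s_b}{s}\ge\mun$ concern only the rich boxes whose upload is tapped, while poor boxes keep $u_b'=u_b$; the constraint $\mun\le u_b'\le u_b$ cannot literally hold for a poor box, and pretending it does by fiat (as you do) is precisely the step that has no physical realization.

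Second, you never address why poor boxes are a problem in the first place, namely playback caching under swarm growth: a swarm $E$ of poor boxes caches the video it is playing but has aggregate upload $U_E<\mun|E|$, so it cannot absorb the growth factor $\mun$ permitted by the model. The point of having the rich relays cache the stripes they forward is to restore the invariant that the set of boxes caching a video requested by $p$ boxes has overall upload at least $\mun p$, which is what the expander/max-flow argument of Theorem~\ref{th:expander} consumes. Your static pairing and per-connection rerouting neither state nor establish this invariant, so even setting aside the capacity-direction error, the argument would not plug back into the main theorem.
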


The idea behind this reduction is to statically reserve some
upload bandwidth of rich boxes to poor boxes. The average upload
of both systems is thus the same. When
a poor box $b$ with upload $u_b<\mun$ downloads a video, it
directly downloads $u_bs/\mun$ stripes as in the scheduling of $A$
and downloads the others through relaying by the rich boxes 
it is associated to. The rich boxes insert also the stripes they
forward in their playback cache. This explains why more storage 
capacity is required. Proof is given in Appendix~A.
% The details of the proof and the necessary
%conditions on the uploads of poor boxes versus boxes with high
%capacity is given in Section~\ref{proof:expander}.\fabien{heu, ils me
%  semble que les details viennent maintenant non ?? OU y'avait plus
%  qui a saut ?}

\section{Distributed Video Scheme}
\label{sec:realistic}

 %the same random allocation scheme than in Section~\ref{sec:randbound}. 

\subsection{Purely Random Allocation}
The video are stored in the boxes according to a purely
random allocation scheme:
each stripe of a video is replicated $k$ times. 
$s$ still denotes the number of stripes per video used.
Each replica is stored in a box chosen independently at random.
Box $i$ is chosen with probability $\frac{d_i}{dn}$.
It is possible to add a video in the system as long as the $k$
chosen boxes have sufficient remaining storage capacity.
Such an allocation scheme is qualified as \emph{purely random}.
% It is a classical balls and bins problem to show that
% $\Omega\pfrac{dn}{\log n}$ videos can be allocated with high probability.
% \laurent{Yacine ? justificiation un peu plus etaille ?}

% \begin{claim}
% catalog size $\Omega(n/\log n)$.
% \end{claim}

\subsection{Playback Cache First Scheduler}
We now propose a randomized distributed scheduling algorithm.  
The main idea of our scheduler is to give priority to
playback cache over allocated videos to allow swarm growth $\mun$. 
Only one upload connection
is reserved for video allocation uploading. An average upload
$u\ge \mun+\frac{1}{s}$ will thus be required.
The scheduling algorithm
is split in two parts: stripe searching and connection granting.

\emph{Stripe searching} is the algorithm run by a box for finding another 
box possessing a given stripe. This algorithm relies on 
a distributed hash table (or any distributed indexing algorithm)
to obtain information about a given stripe. This index allows a box
to learn the complete list of  boxes possessing the stripe through
the video allocation algorithm and a partial list of boxes
in the video swarm (i.e. boxes playing
the video of the stripe). Stripe searching consists in probing
the boxes in these lists until a box accepts a connection for
sending the stripe. A connection request includes
the stripe requested and the \emph{stripe position} in the
stripe file (i.e. an offset position indicating the
next octet of video data to be received). A box is eligible 
for a connection if it has sufficiently many video stripe data ahead
that position and if it has sufficiently many upload. This is
decided by the connection granting algorithm of the box receiving the request.
To give priority to playback-cache forwarding,
boxes of the allocation scheme are probed only when the swarm size
is less than $v_S$ or when a stripe is downloaded from a video
allocation copy less than $v_S$ times.
%\laurent{Et formul\'e comme \,ca, \,ca te va ?}
% \fabien{si y'a des deconnections apres a tient pas. Faut stocker le
%  nombre de visionneurs primaires qq part. Ou bien taper sur le
%  stckage primaire seulement apres que le cache ait chou (ce qui
%  n'est pas tres bon pour le startup delay mais on s'en fout !)} 
To balance upload,
several boxes are first probed at the same time, and an accepting 
box with least number of upload connections for the requested
video is selected.

\emph{Connection granting} is the 
algorithm run by a box that is probed for a connection request.
Suppose box $x$ receives a connection request from box $y$ for a stripe of
video $v$. The connection granting algorithm consists in the
following steps.
\begin{enumerate}
\item If box $x$ is not viewing $v$ and is already uploading the
  stripe, it refuses.
\item If box $x$ has sufficient upload capacity, it accepts.
\item Otherwise, if box $x$ is not playing $v$, it refuses.
\item Otherwise, if the stripe position of $x$ for that stripe is
  not sufficiently ahead the
  requested stripe position, it refuses.
\item Otherwise, if two or more upload connections of box $x$ concern
  a stripe of a video different from $v$, $x$ selects one of them at random,
  closes it and accepts box $y$.
\item Otherwise, if box $x$ is uploading the same stripe to some box
  $z$ and and the requested stripe position of $y$ is sufficiently
  ahead the stripe position of $z$, it closes the connection to
  $z$ and accepts.
\item Otherwise, it refuses.
\end{enumerate}

Note that Steps~4, 5 and~6 can be executed only if box $x$ plays $v$.
Step~6 can be executed only if it uploads $us-1$ stripes of
video $v$. (One connection is always reserved to serve allocated stripes).
A simple optimization in Steps~6 and~7, consists
in \emph{connection flipping}. In Step~6, box $x$ can send
to box $z$ the address of box $y$ for re-connecting
as the stripe position of $y$ is
sufficiently ahead the stripe position of $z$ in that case.
%\fabien{les graphes de De Bruijn ne sont plus tres loin !!}
Box $z$ can then probe box $y$ with the same algorithm.
In Step~7, box $y$ can be redirected to any box $x'$ downloading
$v$ from $x$ and having stripe position sufficiently ahead
the stripe position of $y$. Box $y$ can then probe box $x'$ with
the same algorithm. This way, a box can find its right position
according to stripe position in a downloading tree path of its swarm.
Similarly, in Step~4, box $y$ can make a connection flipping
with the box from which $x$ is downloading, and go up the
downloading chain until it finds its right position.

Note that this algorithm works in similar manner as
Splitstream~\cite{splitstream} builds parallel multicast trees
for each stripe.  The main difference is that each internal node of
a tree receives fresh data in a buffer and forwards data which is at
least $t_S$ old. That way, a performance blip within one node will
not percolate to all nodes behind it in the sub-tree.  Moreover, this
ensures that a node has sufficient time to recover from a parent
failure.  In addition, trees are ordered according to stripe
position: boxes with foremost playing position in the video get
closer to the root whereas newcomers in the swarm tend to be in
lower tree levels. Another interesting point is that nodes
downloading from a box with spare number of connections benefit from
this free upload capacity and download at a rate faster than needed,
allowing to fill their buffer.

\subsection{Correctness}

We cannot prove the resilience of our video scheme against any
sequence of adversarial events. The following technical assumption
is necessary for our proof and appears as a realistic hypothesis.
We assume that 
a given stripe is searched at most $O(\log r)$ times on boxes
storing it through the video allocation scheme. This requirement
is met when the sequence of adversarial events respect the
two following conditions. First, 
a constant number of swarms are started on a
given video (a realistic assumption if we consider a period of
few playback durations). (There is no restriction on swarm
size). Second, node failures are randomly chosen and a given
box is chosen with probability $p_f<1/v_S$.
A sequence of $r$ requests is said to be \emph{stress-less} if
it satisfies these conditions.

\begin{theorem}\label{th:realistic}
Consider a proportionally heterogeneous
$(n,u,d)$-video system with $u\ge \mun +\frac{1}{c}$ and
$\frac{dc}{u}=\Omega(\log n)$.
%operating during an interval time $T$ of several playback durations
For any bound $r=O(n)$,
it is possible to allocate $\Omega({n}/{\log n})$ videos
and successfully manage $r$ adversarial stress-less events
with high probability.
%starting from an arbitrary allocation of resources respecting the model.
\end{theorem}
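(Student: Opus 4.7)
The plan is to establish two independent properties with high probability: (i) the purely random allocation successfully packs $m=\Theta(n/\log n)$ videos into the $n$ boxes without violating any storage cap, and (ii) every one of the $r=O(n)$ stress-less events is served by the PCF scheduler. For (i), I set $s=c$ and $k=\Theta(\log n)$ copies per stripe, so the total number of stored stripe copies is $mks=\Theta(nc)$, which matches the aggregate storage $\Theta(dn)$ using $dc/u=\Omega(\log n)$. Since each copy is placed in box $b$ independently with probability $d_b/(dn)$, the load at $b$ is a sum of independent indicators with mean proportional to $d_b s$; a standard Chernoff bound plus a union bound over the $n$ boxes shows that no box is over-filled except with probability $n^{-\Omega(1)}$.

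For (ii) I split the demand for each stripe into a \emph{cache phase} and an \emph{allocation phase}. The cache phase is handled by the playback-forwarding structure: Steps~4--6 of connection granting, together with the connection-flipping optimization, maintain $s$ Splitstream-like forwarding trees, one per stripe, ordered by playback position. PCF reserves $us-1$ upload connections of each playing box for serving its current video, and $u\ge \mun+\frac{1}{c}$ with $s=c$ yields $us-1\ge \mun s$. Arguing by induction on start-up-delay epochs, a swarm of size $p$ can accommodate $(\mun-1)p$ newcomers within $t_S$: the total spare upload of the swarm is at least $\mun s\cdot p - s\cdot(\mun-1)p = sp>0$, and the position-ordered tree structure guarantees that a newcomer (which has the smallest playback offset) can attach as a leaf of an ancestor with free capacity.

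The allocation phase serves the first $v_S$ downloaders of a swarm from the $k$ static copies of a stripe. The stress-less hypothesis bounds the total number of allocation-phase probes for any given stripe over the whole event sequence by $O(\log n)$ (a constant number of swarms of size $v_S$, plus an $O(\log r)=O(\log n)$ contribution from random failures). I would model the assignment of these $O(\log n)$ probes to the $k=\Omega(\log n)$ replicas as an occupancy problem and tune the constant in $k$ so that a Chernoff bound shows, for each stripe, no replica receives more than two probes; the single allocation-reserved upload slot at a replica absorbs one of them, and the other is redirected into the cache phase by the flipping rule in Steps~5--7. A union bound over the $ms=O(n)$ stripes and the $r=O(n)$ events keeps the overall failure probability $o(1)$.

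The main obstacle will be justifying that the allocation-phase assignment really behaves like balls into bins. Requests are issued adaptively by the PCF scheduler and could, in principle, correlate with where replicas sit; but the adversary is only weak, so the internal randomness of the allocation (which $k$ boxes hold any given stripe) is not visible to the request-generating process. Combined with the random probing order inside stripe searching, this lets me treat the $k$ replicas of a stripe as uniformly random from the scheduler's point of view, which is exactly where the hypothesis $\frac{dc}{u}=\Omega(\log n)$, forcing $k=\Omega(\log n)$, becomes essential for the concentration bound to close the argument.
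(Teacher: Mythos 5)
Your overall decomposition---(i) feasibility of the random allocation via a Chernoff plus union bound, (ii) a split of each request into a cache phase and an allocation (seed) phase, with the stress-less hypothesis bounding the number of seed probes per stripe by $O(\log n)$ and a concentration argument over the $k=\Theta(\log n)$ replicas---is the same as the paper's. Your cache-phase argument also matches Claim~\ref{nbrecocache} in spirit (swarm upload at least $\mun$ times swarm size absorbs growth by a factor $\mun$ per period $t_S$), although the accounting $\mun sp - s(\mun-1)p = sp$ is not really ``spare'': that残 remaining $sp$ is exactly what the existing members already consume, and the paper only gets a strictly free slot because the requesting box has not yet opened its own connection (load $|Z|-\frac{1}{s}$ against capacity $\ge|Z|$). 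You also omit the reduction of Lemma~\ref{lem:reduction}, which is needed so that every box counted in a swarm really contributes upload at least $\mun$ to it.

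The genuine gap is in the allocation phase. You set up the wrong occupancy problem: the obstacle is not that several probes collide on the same replica, but that a replica may sit on a box that cannot serve it at all---because the box is failed, or because its upload is entirely consumed by playback-cache forwarding for the (different) video it is itself playing, or by seed connections for other stripes. Bounding this is the heart of the paper's Claim~\ref{nbrecoseed}: it passes to ``unitary'' boxes of upload $\frac{1}{s}$, uses the one reserved seed slot per real box so that the cache scheduler cancels a given seed connection with probability at most $\frac{1}{2}$ (hence $O(\log n)$ re-searches per stripe), and then performs a global bandwidth count showing that the occupied unitary boxes number at most $usn\paren{1-(1-\frac{1}{u})a}$, i.e.\ a uniformly placed replica is unusable with probability $p\le 1-(1-\frac{1}{u})a<1$; only then does Chernoff over the $k$ independent replica placements yield $\Omega(\log n)$ usable copies per stripe. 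Your proposal never establishes that a constant fraction of (unitary) boxes have a free upload slot, and ``no replica receives more than two probes'' would not help even if true: all $k$ replicas of a stripe could land on busy or failed boxes. This global accounting---combined with the independence of the allocation randomness from the adversary and the cache scheduler, which you do correctly identify---is the missing step you would need to supply.
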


To prove this theorem, we analyze a simpler \emph{unitary} video system
which can be emulated by any proportionally heterogeneous system
with same overall capacities. Again, we choose to use $s=c$ stripes
per video and assume $u\ge \mun+\frac{1}{s}$.
% \fabien{la y'avait encore le $1/s$ que j'ai vir. de meme que le
%   $+1/c$ dans l'enoncee du theoreme.  Ca correspond plus au lemme 1
%   sinon !}
% LV: GRRR c'est pas pour le lemme 1!
We view each box $i$ as the union of $u_i s$ unitary  boxes
with upload capacity $1/s$  (one  stripe) and storage capacity
$\frac{d_i}{u_i}=\frac{d}{u}$. This reduction is indeed penalizing.
Consider two unitary boxes that are part of the same real box.
In the model, stripes
stored on one unitary box can not be uploaded by the other whereas
the real box could use two uploads slots for any combination of
two stripes of any of the unitary boxes.
For some parameter $k$ made explicit later on, a random 
allocation of $k$ replicas per stripe is made according
to the purely random allocation scheme described previously.
This is equivalent to suppose that each replica is stored
in a unitary box chosen uniformly at random since the
system is proportionally heterogeneous.
As each unitary box has a storage capacity of $\frac{ds}{u}$ stripes,
Chernoff's upper bound allows to conclude that purely random
allocation of $\Omega(dsn/u)$ stripe replicas is possible
with high probability when $\frac{ds}{u}=\Omega(\log n)$. 
As we will use $k=O(\log n)$,
this achieves the required catalog size.

%\fabien{ca me semble plus trop correspondre au purely
%  random allocation du debut de la section non ?}
% LV:exact

Second, we simplify the scheduler to an algorithm 
where two  schedulers compete. One is allocating 
\emph{cache stripe} requests within a swarm (i.e. the stripe
will be downloaded from a playback caching copy), 
the other is allocating \emph{seed stripe} requests
from the video allocation pool (i.e. the stripe will be downloaded
from a unitary box possessing it through the random allocation scheme). 
We consider 
that both scheduler operate independently. This is a penalty with
regard to practical scheduling, where simple heuristics may reduce
considerably the number of conflicts, but it simplifies the
stochastic analysis of the system. The cache scheduler allocates
swarm stripes and has priority: it operates at real box level 
according to the above algorithm. From the unitary box point of
view of the seed scheduler, the cache scheduler disables some
unitary boxes. If the unitary box was
uploading some allocated stripe, it is canceled and a seed stripe
search is triggered. This is where the reservation of one seed
stripe per real box is useful in our analysis. A stripe upload connection is
canceled when the real box has at least two of them.
As the cache scheduler cancels one box at random uniformly, a given
seed stripe is searched at most $O(\log n)$ times with high probability.
The seed scheduler scans the list of
unitary boxes possessing the stripe until a free one is found.

Note that a video request in the real system triggers at most $s$
cache stripe requests and/or $s$ seed stripe requests.  A node
failure on a box uploading $us-1$ cache stripes results in $us-1$
cache stripe requests. Each of them may incur a seed request. The
worst event is a video zapping which is equivalent to both events at
the same time\footnote{In the video zapping event from video $v$ to
  video $v'$, the box can can indeed continue to upload the data of
  $v$, but it cannot continue to download more data. In the worst
  case, the buffered data may be scarce for all boxes downloading from
  the box.%
}.  
$r$ adversarial requests thus result in $(u+1)sr$ seed stripe
searches at most.

\begin{claim}\label{nbrecoseed}
All seed stripe searches succeed with probability 
greater than $1-O(\frac{1}{n})$.
\end{claim}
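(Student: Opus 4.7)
The plan is to bound the failure probability of a single seed stripe search by $O(1/n^2)$ and conclude by a union bound over the at most $(u+1)sr = O(n)$ such searches. The backbone is a capacity count: at any time $t$ at most $n$ videos are being played, each occupying $s$ upload slots of rate $1/s$, so at most $ns$ of the $usn$ unitary boxes are busy, and a uniformly random unitary box is busy with probability at most $1/u$.

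Fix a seed search for stripe $\sigma$ at time $t$ and write the busy set as $B'_t \cup B_\sigma$, where $B'_t$ collects the unitary boxes uploading stripes other than $\sigma$ (so $\card{B'_t}\le ns$) and $B_\sigma$ those uploading $\sigma$. The scheduling rules trigger a seed search on $\sigma$ only when its swarm has size less than $v_S$ or when $\sigma$ has fewer than $v_S$ active seed downloads, which forces $\card{B_\sigma}\le v_S = O(1)$. The crucial point is that $B'_t$ is a deterministic function of the adversary's sequence (which is oblivious to the allocation), the scheduler's private randomness, and the allocations of stripes other than $\sigma$; it is therefore independent of the $k$ i.i.d.\ uniform replica positions $X_1,\dots,X_k$ of $\sigma$. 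A search fails only if every $X_j$ is busy, i.e.\ at least $k-v_S$ of them lie in $B'_t$. Conditioning on $B'_t$ and using independence of the $X_j$'s:
\[
P(\text{search fails}) \le \binom{k}{v_S}\pfrac{1}{u}^{k-v_S} = O\paren{k^{v_S}\pfrac{1}{u}^k}.
\]

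Choosing $k=\Theta(\log n)$ with a large enough constant (depending on $u>1$ and $v_S$) pushes this below $1/n^2$, and a union bound over the $O(n)$ seed searches yields the $1-O(1/n)$ bound of the claim. This value of $k$ is consistent with the announced catalog size $\Omega(n/\log n)$. The delicate step will be the independence of $B'_t$ from the allocation of $\sigma$: one must verify that the cache and seed schedulers, when acting on any stripe $\sigma'\neq\sigma$, never consult $\sigma$'s replica positions, and that the oblivious adversary rules out any back-channel correlation. Granting this, the $O(1)$ bound on $\card{B_\sigma}$, baked directly into the scheduler's $v_S$ threshold, is precisely what keeps the effective exponent essentially at $k$ and lets the geometric term $(1/u)^k$ drive the union bound.
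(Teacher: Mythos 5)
Your overall strategy (defer the revelation of the replica positions, argue they are independent of the occupied set, get geometric decay in $k$, take $k=\Theta(\log n)$ and union-bound) is the same as the paper's, but you apply the per-search failure bound to \emph{every} search of \emph{every} stripe, and this is where the argument breaks. The independence you flag as ``the delicate step'' genuinely fails after the first search of a stripe $\sigma$: the first search reveals (some of) $\sigma$'s replica positions and establishes a seed connection at one of them; from then on, the seed scheduler's choices for other stripes $\sigma'$ depend on which boxes are free, hence on which box is serving $\sigma$, hence on $\sigma$'s positions. So at the time of a second search for $\sigma$, the set $B'_t$ is correlated with $X_1,\dots,X_k$ and you cannot condition on it and treat the $X_j$ as fresh uniform draws. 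The paper avoids this by a different bookkeeping: each search scans the replica list \emph{forward from where the previous search stopped}, so each replica is examined (and its position revealed) exactly once, giving a sequence of genuinely fresh Bernoulli trials. The price is that one must then bound the total number of examinations per stripe by $O(\log n)$, which is where the reserved seed slot per box (each preemption kills a given seed connection with probability at most $\frac12$), the bound $v_S$ on simultaneous seed connections, and the stress-less hypothesis (constantly many swarms per video, failure probability below $1/v_S$) are all actually used. Your proposal uses none of these ingredients except $v_S$, which is a sign that the repeated-search issue has been bypassed rather than solved.

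A second, more minor error: your capacity count ``at most $ns$ of the $usn$ unitary boxes are busy, so a random unitary box is busy with probability at most $1/u$'' ignores inactive boxes. A replica landing on a failed box is just as useless as one landing on a busy box, and the stress-less model does include random failures. The paper's corresponding bound is $p\le 1-(1-\frac{1}{u})a$, obtained by charging $us(n-n_a)$ unitary boxes to failures and at most $sn_a$ to connections. This only changes the constant in $k=\Theta(\log n)$ (the bound is still a constant strictly below $1$ for $a>0$, $u>1$), so it is fixable, but as written your geometric term has the wrong base.
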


\begin{proof}
We take the point of view of the seed scheduler: a unitary box
is free if its real box is active, and the cache scheduler is
not using it. As the adversary and the cache scheduler operate
independently from stripe allocation, we make the analysis as
if the random choices used for stripe allocation were discovered
as seed requests arrive.
In our case, the purely random scheme consists in allocating
each replica  in a unitary box
chosen uniformly at random.
We show that for $k=O(\log n)$, each replica is considered at most
once with high probability.

For instance, consider a seed request for a stripe $i$.  Its list of
allocated replicas is scanned forward. Each stripe replica falling
in an occupied box is discarded until a replica falls in a free
unitary box.  As observed before, the set $X$ of unitary boxes that
are either under failure or playback cache forwarding is chosen
independently from the replica position.  The set $Y$ of unitary
boxes uploading seeding stripes depends from independent choices for
other stripe replicas.  The probability $p$ that a replica of stripe
$i$ falls in one of the $t=|X\cup Y|$ occupied unitary boxes is
$p=\frac{|X\cup Y|}{usn}$.  Considering that the number of active
boxes is $n_a\ge an$ and that average upload of active boxes remains
$u$ at least, we obtain that the number of failed unitary boxes is
at most $usn - usn_a$. As
the number of current seed connections is
$|Y|$, the number of cache connections is at most $sn_a - |Y|$. We
thus have $|X|\le u(n-n_a)s +sn_a - |Y|$ and $|X\cup Y|\le u(n-n_a)s
+ sn_a\le usn(1 - (1-\frac{1}{u})a)$. We thus have $p\le 1 -
(1-\frac{1}{u})a$.

% LV: mais il peut y avoir peu de seed stripes!!!
% Par Yacine:
% According to our stress-less events hypothesis,
% node failures occur with probability $p_f\leq 1/v_s$.
% Moreover, as discussed previously, for non failed node, the reservation of one stripe for seed
% connections in real boxes ensures that
% a given seed connection is discarded with probability at most
% $\frac{1}{2}$. Combining these two possible events, a seed connection fails with probability
% at most $p_f+\frac{1}{2}(1-p_f)< \frac{v_s+1}{2 v_s}$.

% To determine the maximum number of replicas of the same stripe that are discarded, one can see
% the allocation of slots to these stripes as throwing independantly $n \frac{v_s+1}{2 v_s}$ balls (failed connections)  into $n$ bins (stripes). A standard result for the occupancy problem gives $ \frac{v_s+1}{2 v_s}+log(n)=O(log(n))$ as the maximum number of discarded stripes.

As $r=O(n)$, the number of stripe requests is
at most $\lambda n$ for some $\lambda > 0$.
As discussed previously, the reservation of one stripe for seed
connections in real boxes ensures that
a given seed connection is discarded with probability at most
$\frac{1}{2}$. 
A given stripe is thus discarded
at most $\log_2 \lambda n^2$ times by the cache scheduler with
probability $1-\frac{1}{\lambda n^2}$ at least. 
Similarly, for a given stripe, the event that
a box uploading it with a seed connection fails happens
at most $O(\log n)$ times with high probability
at least according to our stress-less events hypothesis.
Every stripe is thus
discarded at most $O(\log n)$ times with high  probability.
 %at least $1-\frac{1}{n}$.
There may be up to $v_S$ seed connections for a given stripe
and stress-less events start at most $\lambda'\log n$ swarms
on the video of the stripe for some $\lambda'>0$. This results
in $v_S\lambda'\log n$ stripe searches at most. %($v_S$ per swarm start).
Finally, we note that with high probability, every stripe is searched
at most $\lambda''\log n$ times with high probability for some
constant $\lambda''>0$.

The list of replicas of a stripe can thus be seen as a sequence
of zeros (when the replica falls in an occupied unitary box) and
ones (when the replica is found). A zero occurs with probability
less than $p$ and a one with probability more than $1-p$.
We can conclude the proof if the list of ones in all stripe lists
is greater than $\lambda''\log n$ with
high probability.
As random choices for each replica are independent, we conclude
using Chernoff's upper bound that a sequence of
$k=O(\log n / (1 - p))$ replicas contains the required number of ones
with high probability. (Including the parameters of the model, we use
$k=O(\frac{v_S}{a}\frac{u}{u-1}\log n)$.)
\end{proof}

Of course, this vision of consuming the list of replicas of a stripe
is particular to our proof. In practice, one can loop back to  the
beginning of the list when the end is reached.

\jump  %------------- jump this ------------------------
For instance, consider a seed request for stripe a stripe.
Its list of allocated replicas is scanned forward. Each stripe replica
falling in an occupied box is discarded until a replica
falls in a free unitary box. For the $i$th request, we thus
scan the replicas $r_i, r_i+1,\ldots$
where $r_i-1$ denotes the overall number of stripe replicas considered 
for previous requests. According to the random permutation $\pi$
of the allocation scheme, they fall in 
storage slots $\pi(r_i), \pi(r_i+1),\ldots$.
We are interested in bounding the probability that these slots 
belong to occupied unitary boxes.
A uniform way of drawing a random permutation is to pick uniformly
at random a storage slot different from the slots where the
$r_i-1$ first replicas have fallen.  As observed before,
the set $X$ of unitary boxes
that are either under failure or playback cache forwarding
is chosen independently from the permutation. 
The set $Y$ of unitary boxes uploading seeding stripes depends
from the already considered entries of the permutation.
The probability $p_j$ that replica number $j+1$ falls in one of the $t=|X\cup Y|\frac{d}{u}$ storage
slots of boxes is have $p_j=\frac{|X\cup Y|\frac{d}{u} - j}{dns - j}\leq \frac{|X\cup Y|}{uns}$. 

Considering that the number of active boxes is $n_a\ge an$
and the number of current seed connections is $|Y|$, the number
of cache connections is at most $sn_a - |Y|$. We thus
have $|X|\le u(n-n_a)s +sn_a - |Y|$ and 
$|X\cup Y|\le u(n-n_a)s + sn_a\le nsu(1 - (1-\frac{1}{u})a)$. We have then $p_j\le 1 - (1-\frac{1}{u})a$. The probability that all of the $k$ copies of some stripe fall all into the $X\cup Y$ boxes is at most for $k>\frac{2\log(n)}{ (1-\frac{1}{u})a}$, then $p_j^k \leq \frac{1}{n^2} $.

\yacine{
1)Il subsiste un problme de taille. Combien de fois a-t-on de pannes ou de r\'eaffectations aux swarms de boites sources d'une stripe? En fait, on ne prouve qu'une seule chose, c'est qu'elles peuvent etre servies la premi\`ere fois qu'elles sont demandees mais pas apr\`es des \'evenements de ce type. 2) Hormis le point 1), il est possible de traiter le modele o\`u les boites revivent et tombent en panne, en consid\'erant le fameux $x_j$. Le probl\`eme est que je n'arrive pas \`a borner correctement la proba $q_{x_j}$. Voir le calcul dans le commentaire suivant.
3) Enfin, le mod\`ele purely random est nettement plus simple dans ce cas, faut-il passer \`a ce mod\`ele pour ce th\`eor\`eme.}

\yacine{  
The probability $p_j$
that replica number $j+1$ falls in one of the $t=|X\cup Y|\frac{d}{u}$ storage
slots of boxes in $X$ depends on the number $x_j$ of replicas
previously  considered that have fallen in $X\cup Y$ except those being uploaded by the $Y$ boxes. Let $p_{i,{x_j}}$ be this conditional probability and $q_{x_j}$ be the probability that $x_i$ stripes have fallen in $X\cup Y$. Except the $|Y|$ replicas that are necessarily uploaded by the $Y$ boxes, the rest of the stripes are distributed randomly. Notice that $q_{x_i}$ bounds from above the probability of allocating $x_i$ replicas to the slots of $X \cup Y$ except those occupied by the replicas being uploaded by the $Y$ boxes. 
We have $p_{j,{x_j}}=\frac{t- x_j}{dns - j}$.
Clearly $$p_j\leq\sum_{x_j=0}^j p_{j,{x_j}} q_{x_j}\leq \frac{1}{dns - j}\left(t-\sum_{x_j=0}^{j} x_j\binom{j}{x_j}\left(\frac{t}{nds}\right)^{x_j} \left(1-\frac{t}{nds} \right)^{j-x_j}\right)= \frac{t}{dns}$$ 
[[Le probl\`eme avec les in\'egalit\'es ci-dessus est que je ne sais pas dire que la binomiale borne bien dans le cas regular random bien qu'elle en soit tr\`es proche mais superieure dans certains cas et inferieure dans d'autres en fonction de $x_i$.]]
Considering that the number of active boxes is $n_a\ge an$
and the number of current seed connections is $|Y|$, the number
of cache connections is at most $sn_a - |Y|$. We thus
have $|X|\le u(n-n_a)s +sn_a - |Y|$ and 
$t\le u(n-n_a)s + sn_a\le nsu(1 - (1-\frac{1}{u})a)$. Then $p_j \le 1 - (1-\frac{1}{u})a$
}

%We have $p_j\le 1 - (1-\frac{1}{u})a$ as long as 
%$x_j\ge\frac{|X\cup Y|}{un} j$. This is the case for example
%when no box is ever released ($X$ and $Y$ keep growing). 
\laurent{J'aimerais bien avoir qqc quand meme!! Ca doit pouvoir se
jouer une boite en panne au debut qui arrive. Pour une qui est
tombee en panne puis revenue, c'est autre chose.}

\finjump %------------- end jump ------------------------

\begin{claim}\label{nbrecocache}
All cache stripe searches succeed with probability 
greater than $1-O(\frac{1}{n})$.
\end{claim}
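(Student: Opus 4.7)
The plan is to mirror the structure used for Claim~\ref{nbrecoseed}, translating the seed argument into a within-swarm argument where the random allocation is replaced by the swarm-growth bound and the connection-flipping discipline.

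First I would establish existence of a suitable provider for every cache stripe request. Fix a stripe request arriving at time $t$ in a swarm of size $p(t)$. The swarm-growth hypothesis gives $p(t-t_S)\ge p(t)/\mun$, so at least $p(t)/\mun$ boxes in the swarm have been playing the stripe for at least $t_S$ time units and hence store the requested position in their playback cache. Because $u\ge\mun+\frac{1}{s}$ and one unitary stripe slot per real box is reserved for the seed scheduler, the cache scheduler may use $\mun$ upload units per box for that stripe. Summing over the $p(t)/\mun$ "ahead" boxes yields cache upload capacity at least $p(t)$, exactly matching the aggregate demand; the additional $\frac{1}{s}$ slack serves as the per-box margin that will be consumed by the randomised probing.

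Second I would analyse how the distributed algorithm locates such a box. A probe refusal can happen only in Steps~1, 3, 4, or~7 of connection granting. Steps~5 and~6, together with the connection-flipping optimisation, ensure that a newcomer is redirected up the downloading chain until it reaches its correct position by stripe-offset, so refusals in Steps~4 and~7 only occur when the probed box is already strictly behind or strictly congested, not when an ahead box with free capacity exists in the sub-tree reachable by flipping. Consequently each probe succeeds with probability at least the fraction of ahead boxes having spare cache capacity, which by the capacity count above is bounded below by a positive constant depending only on $u,\mun$ and $a$. Probing a batch of $O(\log n)$ boxes in parallel (as the algorithm already does to balance load) then locates an accepting box with probability $1-O(1/n^{2})$ by a Chernoff bound analogous to the one used in the proof of Claim~\ref{nbrecoseed}.

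Third I would apply the union bound. A stress-less sequence of $r=O(n)$ adversarial events triggers $O(sr)=O(n)$ cache stripe requests; since each individual search fails with probability $O(1/n^{2})$, the probability that any fails is $O(1/n)$, giving the claimed bound. The main obstacle, and the step I would spend most care on, is rigorously arguing that the cache scheduler's decisions on different requests can be analysed as essentially independent probes despite the shared swarm tree: concretely, one must verify that connection flipping resolves contention among simultaneous arrivals without accumulating more than a constant expected load on any single ahead box, so that the per-probe acceptance probability stays bounded away from zero throughout the execution. This is where the $\frac{1}{s}$ slack in the upload hypothesis, the balancing effect of parallel probing, and the stress-less assumption (capping swarm starts at $O(1)$ per video and random failures at rate $p_f<1/v_S$) must be combined in a single amortised argument.
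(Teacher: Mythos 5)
Your capacity count is essentially the paper's: decompose the swarm at time $t$ into the boxes that arrived before $t-t_S$ (at least a $\frac{1}{\mun}$ fraction by the growth bound) and the newcomers, and observe that the old boxes can devote $u-\frac{1}{s}\ge\mun$ upload each to the swarm, so their aggregate capacity covers the aggregate demand. But there is a genuine gap in how you exploit this. With the stated hypothesis $u\ge\mun+\frac{1}{s}$ the slack between supply and demand is $\frac{1}{s}$ \emph{in aggregate} --- exactly one free connection slot in the entire swarm, since the requesting box contributes demand $|Z|-\frac{1}{s}$ against capacity $\mun|X|\ge|Z|$. It is not a per-box margin. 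Consequently the adversary can leave all but one ahead box saturated, the fraction of ahead boxes with spare cache capacity can be as small as $1/|X|$, and your claim that ``each probe succeeds with probability \ldots bounded below by a positive constant depending only on $u,\mun$ and $a$'' does not follow; the Chernoff bound over $O(\log n)$ parallel probes then collapses. The paper is explicit about this distinction: under $u\ge\mun+\frac{1}{s}$ it only argues \emph{existence} of a free slot and locates it deterministically, assuming the box has the full list of swarm members (or by climbing the downloading chain via connection flipping); it then remarks that the randomized $O(\log n)$-probe argument you want requires the strictly stronger hypothesis $u\ge\mun+\frac{2}{s}$, which does yield a constant fraction of non-saturated boxes.

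Two smaller omissions: you do not handle the young-swarm base case ($X=\emptyset$, swarm size at most $v_S$), where the paper falls back on seed stripe searches and Claim~\ref{nbrecoseed}; and you drop the pre-processing for poor boxes (those with $u_i<\mun+\frac{1}{s}$), which must reserve upload on richer boxes as in Lemma~\ref{lem:reduction} for the per-box figure of $\mun$ cache upload to be legitimate in the heterogeneous setting. Your closing concern about amortising contention across simultaneous arrivals is well placed, but the resolution is not the $\frac{1}{s}$ slack plus parallel probing --- it is the deterministic ordering by stripe position that connection flipping enforces.
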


\begin{proof}
We assume a choice of $s$ such that $u \ge \mun + \frac{1}{s}$. As
in Lemma~\ref{lem:reduction}, we suppose that a box $i$ with poor
upload capacity $u_i < \mun + \frac{1}{s}$ reserves 
an upload $\mun + \frac{1}{s} - u_i$ on some richer boxes.
(Note that this augments the probability of failure for the node, 
a problem we do not try analyze here). A rich box forwarding $i$ stripes
to a poor box accepts preferentially connections for these stripes
(as for stripes of the video it is playing) up to an upload bandwidth of
$\mun i$.

First consider the case where a box $b$ is entering the swarm
(i.e. it requests position 0 in the stripe file).
The swarm $Z$ of $v$ can be decomposed in the set $X$
of boxes arrived in $Z$ before time $t-t_S$ and the set $Y$ of boxes arrived in
$Z$ later on. We thus have $Z=X\uplus Y$ and $b\in Y$.
If $X=\emptyset$ then $|Y|\le v_S$ according to the arrival bound
of our model and each seed stripe search succeeds
with high probability as discussed above.
On the other hand, if $X\not=\emptyset$, we have
$|Z|\le \mun |X|$ according
to the exponential bound on swarm churn in our model.
The boxes in $X$ have overall upload capacity $(u-\frac{1}{s})|X|$
(including the capacity reserved on richer boxes)
and serve at most $|Z|-\frac{1}{s}$ times the video
(box $b$ is still searching for a stripe).
As $u-\frac{1}{s}\ge\mun$, some connection slot is free for
accepting the stripe connection of box $b$. It can always be found
if $b$ has the full list of boxes in the swarm.
The fraction of boxes with exceed capacity for the video
is thus at most $\frac{\mun}{u-1/s}$. 
Note that a slightly higher value of $u\ge \mun+\frac{2}{s}$
would result in a constant fraction of nodes with exceeded capacity
for their video. This would allow to find one
with high probability if the list of random nodes in the
swarm has length $O(\log n)$.

Now consider the case where a box $b$ is reconnecting in its swarm due
to some zapping or node failure event. We can prove similarly that
the connection flipping algorithm allows to find a node in the swarm
to connect to. This relies on the hypothesis that the number of
reconnecting nodes
at position $t$ in a video increases by a factor $\mun$ at most
during a period of time $t_S$ as assumed by our model
(all types of swarm churn are aggregated in the bound $\mun$).
\end{proof}

\jump %------------------------ jump ---------------------------------
\bigskip
\subsection{tentative de preuve par FdM}
Four kinds of boxes
\begin{enumerate}
\item zapped : an event occurred on this box less than $t_S$
  ago. Nature of event : state change, failure,
  resurrection... proportion globale BORNE par $(1-\mun)$. 
  Ont du dl mais pas d'up du  tout (vi on peut le supposer)
\item failed (ni up ni dl) proportion BORNNEE PAR $<(1-a)$
\item active : regardent le film depuis plus de $t_S$. Ont du up et du
  dl. proportion : $\aleph$
\item idle. Ont du up mais pas de dl. Bien sur pour faire chier on
  peut toujours supposer que y'en a pas... Sinon proportion $\iota$
\item reconnected. Sont des boites actives, mais dont la connectivit
  chang au cours de $t_S$. Elles continuent a uploader, grce a leur
  buffer. Mais faut pas que y'en ait trop...
\end{enumerate}

\medskip

Evenements possibles : 
\begin{itemize}
\item zap. Active $\to$ zapped.
\item allumage. Idle $\to$ zapped
\item exctinction OU PANNE: active $\to$ zapped
\item Fin de startup zapped $\to$ active
\item Fin de cache : zapped $\to$ idle
\item Panne averre : zapped $\to$ failed
\end{itemize}

\medskip

Upload total : $(\aleph+\iota)u$ au moins\\
Download total $\aleph + (1-\mun)$ au plus\\
Donc pas de pb sur les bandes passantes globales meme si $\iota=0$
parce qu'on a justement $u>\mun$.

\medskip

On rintroduit les boites full : sont actives ou idle, mais a bloc

\medskip

\textbf{claim debloquant}
 le nombre de copies dispos (i.e. dans une boite qui peut les
 uploader) est $k'=k/constante$ whp.
\hrule
Demo : il y a une proportion constante de boites
\begin{itemize}
\item eteintes : $(1-a)$
\item zappees : $(1-\mun)$
\item full : $1/u$
\end{itemize}
Soit $z = (1-a)+(1-\mun)+1/u$ un majorant de cette proportion. Of
course faut que ce soit $<1$\\
$yzk$ doit etre un majorant du nombre de boites avant de l'upload
dispo, o $y$ est obtenu quand on est bon en proba et est fonction de
la proba cherche.

Finalement ne nombre cherch est $k' yzk - \kappa$ car il faut enlever
les au plus $\kappa$ connections store $\to$ swarm.

Si on s'est pas gour, $k'$ est plus que 1.
\hrule

\medskip
Point~2 : du coup ni le seed ni le cache scheduler ne peuvent faillir
! C'est comme dans l'ancienne preuve. La proba est donc dans le fait
qu'il doit tout le temps rester au moins une copie uploadable par
stripe. Qu'en pensez-vous ?

\laurent{OLD STUFF bellow}

\begin{theorem}
\label{th:realistic_old}
Every proportionally heterogeneous  $(n,u,d)$-video system
with $u\ge 1+\frac{1}{c}$ can achieve catalog size
$\Omega(\frac{dn}{\log_{u/a} ns})$ with high probability

associated to the random connection scheduler
can satisfy any sequence of node state changes with high probability
in the realistic adversary model with realistic exponential arrivals and with at least $a$ active nodes.
\end{theorem}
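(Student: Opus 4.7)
The plan is to reduce the problem to a \emph{unitary} system and then analyze two loosely coupled sub-schedulers. First, using Lemma~\ref{lem:reduction}, I would assume every box has upload at least $\mun+\frac{1}{s}$ after a static bandwidth reservation, so that poor boxes can be handled uniformly with the rich ones. I would then replace each real box of upload $u_i$ and storage $d_i$ by $u_is$ unitary boxes of upload $\frac{1}{s}$ and storage $\frac{d}{u}$; proportional heterogeneity makes this clean, and it is a pessimistic reduction since unitary boxes coming from the same real box cannot share stripes. Choosing $s=c$ and a purely random allocation with $k=O(\log n)$ replicas per stripe, a Chernoff bound on the per-slot load shows that $\Omega(dsn/(uk))=\Omega(n/\log n)$ distinct videos can be placed with high probability, using the hypothesis $\frac{dc}{u}=\Omega(\log n)$.

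Second, I would decouple the scheduler into a \emph{cache} scheduler that allocates intra-swarm playback connections (with priority) and a \emph{seed} scheduler that directs requests to unitary boxes holding the stripe through the random allocation. The reservation of one seed upload slot per real box guarantees that when the cache scheduler preempts a seed connection, the preempted one is chosen uniformly among at least two candidates, so a given seed connection is re-searched $O(\log n)$ times with high probability. Combined with the stress-less hypothesis (constant number of swarms per video and random failures with probability below $\frac{1}{v_S}$), each stripe is searched by the seed scheduler at most $O(\log n)$ times whp. For the seed scheduler I would then reveal the random allocation lazily: the set $X$ of failed or cache-busy unitary boxes is produced by the adversary and the cache scheduler independently of the allocation, while the set $Y$ of unitary boxes currently occupied by seed connections depends only on previously processed allocation choices. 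The probability that a fresh replica lands in the occupied set is therefore at most $p\le 1-\paren{1-\frac{1}{u}}a<1$, and scanning $k=O\pfrac{\log n}{1-p}$ replicas yields a free slot whp by a Chernoff bound.

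For the cache scheduler I would exploit the bounded swarm growth. Decomposing a swarm $Z$ into members $X$ present for more than $t_S$ and recent arrivals $Y$, either $X=\emptyset$ and $|Y|\le v_S$, so the seed scheduler covers the request, or $|Z|\le\mun|X|$ and the reserved capacity yields uploads summing to $\paren{u-\frac{1}{s}}|X|\ge\mun|X|\ge|Z|$, leaving a spare playback-cache slot somewhere in $X$. The connection-flipping rules of Steps~4--6 let a probing box climb the position-ordered multicast tree, and a DHT-based index of $\Omega(\log n)$ random swarm members suffices to reach an accepting node whp; reconnection events due to zapping or failure are handled identically since all swarm churn is aggregated in the bound $\mun$. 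The main obstacle I anticipate is the correlation between seed occupancies across stripes: the lazy-revelation argument requires independence that is not present a priori. I would resolve it by conditioning on the filtration induced by the order in which replicas are revealed and by using the fact that $X$ is exogenous to the allocation, so that the Chernoff estimate applies conditionally on the past. A final union bound over the $O(n\log n)$ search attempts then yields the overall success probability $1-O(1/n)$ claimed by the theorem.
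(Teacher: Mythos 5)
Your proposal follows essentially the same route as the paper's own argument (the proof of Theorem~\ref{th:realistic} via Claims~\ref{nbrecoseed} and~\ref{nbrecocache}): reduction to unitary boxes, decoupled cache and seed schedulers with one reserved seed slot per real box, lazy revelation of the random allocation giving occupancy probability $p\le 1-\paren{1-\frac{1}{u}}a$, and the swarm decomposition $Z=X\uplus Y$ with $\paren{u-\frac{1}{s}}|X|\ge\mun|X|\ge|Z|$. Note only that your cache-scheduler step genuinely requires $u\ge\mun+\frac{1}{s}$, which is the hypothesis of the paper's final version of this theorem rather than the weaker $u\ge 1+\frac{1}{c}$ appearing in the statement as quoted.
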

% If we suppose $a$ is large (near to $1$) and $u\ge 1+2/s$ then the
% lower bound on catalog size approximately becomes $\frac{2dn}{2s\log
%  n + s\log s}$, i.e. is scalable with a factor $\frac{1}{\log n}$.
%The proof of this theorem in is Section~\ref{sec:preal}

\subsection{Preference to relay}

Indeed, for $u\ge 1+1/s$, we
can select a \emph{preemptive} bandwidth $u'\ge \frac{\mun}{s}$.
More
generally each box $i$ has a preemptive bandwidth threshold
 such that $u_i>u'_i\ge \frac{\mun}{s}$.
We suppose that the lookup mechanism includes a counter incremented
each time a box enters the swarm for a video $v$. 
That way, each box can select a preferential stripe $x$ of $v$
in a cyclic manner.

If Box $i$ receives a request and has some free upload capacity, it accepts.
%for stripe $y$ and already uploads it
%it refuses. A stripe that is not viewed is uploaded at 

If Box $i$ already uploading at its maximum rate, it may accept a new
connection from box $j$ (and therefore reject an old one $j'$) in the
following case: $i$ is watching a video $v$ and has preferential
stripe $x$, $j$ is requesting $x$,
$i$ uploads less than $u_i's$ times stripe of $x$,
and $j'$ is not watching $x$. The \emph{preemptive} bandwidth,
i.e. the bandwidth of each box $i$ allocated for the 
\emph{preferential stripe relaying}
rule is $u'_i< u_i$ (strictly less : there is at least a non-preemptive
upload slot). 

This schemes implies that a downloading box may be
\emph{disconnected}, and has to find a new uploading box. Fortunately,
as shown below, under good hypotheses a disconnected box can always
reconnect using the ``non-preemptive'' bandwidth of another box (the
remaining $u_i-u_i'$).

\begin{claim}\label{nbreco}
Each disconnected box can reconnect to another box without creating a new disconnection.
\end{claim}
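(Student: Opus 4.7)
The plan is to exploit the two structural features given by the scheduler's definition: first, since $u'_i<u_i$ strictly, every box reserves at least one unit-rate ($1/s$) upload slot that is declared non-preemptive; second, non-preemptive slots, by definition, are never used to preempt anyone. The combination of (a) and (b) means that any reconnection that a disconnected box $j'$ performs on a non-preemptive slot cannot itself trigger a further disconnection. The claim therefore reduces to showing that $j'$ can always locate a free non-preemptive slot on some box that supplies the stripe $x$ it is missing — either a holder of $x$ through the random allocation or a cache-forwarder of $x$ within the current swarm.

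The next step would be to set up a bipartite graph between pending stripe requests (existing plus the new reconnection) and non-preemptive slots, with an edge whenever the box stores or playback-caches the requested stripe, and to argue existence of a saturating matching by a Hall / max-flow argument paralleling Theorem~\ref{th:expander} but restricted to the non-preemptive pool. On the capacity side, summing the $u_i - u'_i\ge 1/s$ slack over active boxes yields a non-preemptive pool of total rate $\Omega(an/s)$; on the demand side, the number of simultaneously disconnected boxes at any instant is bounded by the rate of fresh arrivals causing preemptions, which by the swarm-growth hypothesis is at most $\mun$ times current swarm size per $t_S$. Under $u\ge \mun + 1/s$, the $1/s$ of non-preemptive capacity per box is exactly the slack needed to absorb this residual demand globally.

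The main obstacle is that the stripe-to-box bipartite graph is dynamic: playback-cache copies of $x$ appear and disappear with swarm churn, so Theorem~\ref{th:expander}'s static expansion bound cannot be invoked verbatim. The delicate step will be to argue uniformly over every swarm configuration consistent with the $\mun$ growth bound that the subgraph induced by non-preemptive slots retains an expansion property sufficient to satisfy Hall's condition. I expect this to go through by taking the replication parameter $k$ of the random allocation slightly larger than in Theorem~\ref{th:expander}, so that the expansion margin absorbs the worst-case contribution of the (time-varying) playback cache, and then completing with a union bound over the $O(n)$ disconnection events of the realistic adversarial sequence to obtain a high-probability guarantee.
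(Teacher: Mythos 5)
Your opening observation is right and matches the paper's intent: because a non-preemptive slot never evicts an existing connection, any reconnection granted on such a slot cannot cascade, so the claim reduces to locating a box that holds the missing stripe and has a directly free slot. The gap is in how you propose to find that slot. A Hall/max-flow argument in the style of Theorem~\ref{th:expander} establishes the \emph{existence} of a global assignment of all pending stripe requests to upload slots, but extending a matching to cover one additional request generally requires an augmenting path, i.e.\ re-routing connections that are already established --- and re-routing an existing connection is precisely a new disconnection, which the statement forbids. Restricting the bipartite graph to the non-preemptive pool does not repair this, because the previously granted seed connections already occupy that pool. Your aggregate capacity count ($\Omega(an/s)$ of non-preemptive rate versus the preemption rate driven by swarm growth) has the same defect: it does not localize to the particular stripe the disconnected box needs, and the expander patch you sketch for the dynamic cache graph still only yields existence of a matching, not a greedily reachable free slot.

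The paper's route is different and avoids augmenting paths entirely by exhibiting a directly free slot in each of the two cases. For a reconnection served from within the swarm, it decomposes the swarm $Z$ into the set $X$ of boxes that arrived more than $t_S$ ago and the rest; the growth bound gives $|Z|\le\mun|X|$, the boxes of $X$ have preemptive capacity at least $(u-\frac{1}{s})|X|\ge\mun|X|\ge|Z|$ while currently serving at most $|Z|-\frac{1}{s}$ stripe copies, so some slot is free on a box sufficiently ahead in the stripe file, and connection flipping lets the requester walk up the forwarding chain to it. For a reconnection served from the allocated (seed) copies, the reservation of one non-preemptive slot per box bounds the probability that a given replica is unavailable strictly away from $1$, the stress-less hypothesis limits each stripe to $O(\log n)$ searches, and a Chernoff bound over the $k=O(\log n)$ independently placed replicas guarantees that scanning the replica list finds a free copy with high probability. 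To complete your proof you should replace the matching argument by this kind of per-stripe counting; as written, the plan does not establish the ``without creating a new disconnection'' part of the claim.
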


A sketch of proof is for the claim and the
theorem~Theorem~\ref{th:realistic} is given
in Section~\ref{sec:preal}.

\subsection{Nodes failures: Managing the proportional heterogeneity}\label{sectrealhetero}
The reduction from proportional heterogeneity to homogeneity is done
using Lemma~\ref{lem:reduction} construction: %\fabien{verifier}
each \emph{poor} node (with upload less than 1) receives bonus slots
from rich nodes (with upload $\ge 1+\frac{1}{s}$) acting as 
proxies and caching the stripe forwarded.

If a \emph{poor} node fails, this is no problem for the system (if
the ratio stays under $1-a$ of course, see proof in
Section~\ref{sec:preal}). But if a rich node fails, the poor nodes
that use it as proxy must find another proxy. If the bandwidth is
high enough, such proxy always exists. A scheme enabling dynamic
upload slot renting has to be used. We can image a similar
lookup procedure as for finding boxes with a given stripe.
% But during the time $t_S$
% they need to reconnect, their own upload risks to be cut (during
% $t_S$ at most, less with buffering). So failures creates exactly the
% same delays than connection of new nodes, namely $t_S$. That's why
% exponential arrivals assumption counts failures just like arrivals.
% 
So we are lead to Homogeneity case. The proof of
Theorem~\ref{th:realistic} is completed in Section~\ref{sec:preal}.

\subsection{Playback caching index}

As discussed bellow, video allocation updates are made at
long time scale with respect to playback time.
The list of boxes storing a given video according to 
the video allocation scheme can thus be directly associated to the
video in the menus proposed to the user when browsing the catalog.
We may thus assume that this information available to
the box as soon as the user launches the playback.

Practical scheduling implementation relies on a distributed hash
table (DHT). The DHT stores, for each stripe, the set of
\emph{available} boxes, i.e. the box who can upload that stripe.
\laurent{Question de comment l'utilisateur browse le catalog.}

Every available box registers in the DHT and give its upload
power.  When a box wants to download, a DHT lookup is performed to
obtain a list of candidate goods uploaders, and it can contact them
and begin downloading to the best one (will largest remaining
bandwidth). As shown below, it is useful that boxes storing
due to the video allocation scheme and the boxes storing
due to playback caching register separately.

\subsection{Video allocation updates at long time scale}

On a long time scale, new boxes enter the system and old boxes
leave. If the system is stable, video stripes disappearing
should be re-allocated in new boxes. If the system grows,
new videos may be added to the catalog. 
The random allocation scheme may be incremental:
select randomly some boxes and put their stripes in the new boxes.
The space liberated is filled up with the copies of the stripes of the
new videos.
\laurent{Yacine est-ce que toutes les permutations sont
toujours equiprobables ?}

Notice that the caching of the last viewed video, used for assisting
massively viewed videos, could be extended to some kind of LRU storage
system: the $k'$ most recently viewed video are stored in the
boxes. Then the number of replicates of a video is no more a constant
$k$ but follows video popularity. If the storage space is decreased
for increasing the cache space, the bound on the catalog size decrease
(Theorem~\ref{th:realistic}) so this is useless. Furthermore, we have seen
that the \emph{preference to the relay} rule ensures that popular
videos request are \emph{less} subject to be rejected than rare video
requests. On the other hand, the dynamics of the scheduler are easier
to handle (as the DHT is less queried) and the system is more robust
with respect to targeted attacks (against boxes storing popular
videos) from a \emph{strong} adversary.

\finjump %------------------------ end of jump -----------------------

% \begin{figure}[t]
% \noindent\hspace{-.9cm}\includegraphics[width=.60\textwidth]{../figures_sigcomm/multiple_vs_k} 
% \caption{Requests satisfied as a function of $k$.}
% \label{fig:multiple_vs_k}
% \end{figure}

\section{Simulations}
\label{sec:simus}
In this section we evaluate the performance of a practical
allocation scheme by the dint of simulations. This scheme is similar
to the one described in section ~\ref{sec:realistic} but it presents
two main differences. Firstly, the storage allocation is based on a
random regular graph obtained by a permutation $\pi$ of the $kms$
stripes into the $dns$ storage slots. This choice is motivated by
the more practical aspect of regular random allocation that allows
to completely fill-in boxes.
Secondly, once the connections
are established, they cannot be re-negotiated when a new
video-request is performed. The goal is to test the basic
functionning of the algorithm to understand where connection
renegociations become necessary.

We assume that every node has a cache of size $1$ where it stores all
the stripes of the video it is watching. We suppose video requests arrive at a constant rate, and $t_S=2$ minutes.

As stated in previous sections, the efficiency of an allocation
scheme depends on the requests pattern. In the following, we use
five kind of 
adversarial schedulers to generate video request sequences:
\begin{itemize}
\item {\bf Greedy adversarial}. The  greedy adversarial
  scheduler chooses the request for which the system will select a
  node with minimal remaining upload bandwidth (among the set of
  nodes that can be selected by a request in the current
  configuration). This adversary make greedy decisions. It is strong
  in the sense that it is aware of video allocation and current connections. 
%   is greedy and unaware of the exact stripe
%   allocation. 
  % FM : suite inutile ? Moreover, because of the cache,
              % the bandwidth available for a selected video is
              % increased.
\item {\bf Random}. The random scheduler selects a video uniformly at random in the catalog.
\item {\bf Netflix}. $m$ videos are randomly selected from the
  \emph{Netflix Prize} dataset \cite{netflix} as catalog for our
  simulated system. Requests are performed following the real
  popularity distribution observed in the dataset.
\item {\bf Netflix2}. The $m$ most popular videos of the Netflix
  Prize dataset are selected as catalog for our simulated
  system. Requests are performed following the real popularity distribution of these $m$ videos.
\item {\bf Zipf}. The scheduler selects videos following a Zipf's law popularity distribution with $\gamma=2$.
\end{itemize} 
The peers that perform a request follow a sequence of random permutations of the $n$ peers. All our simulations are performed with $n=100$ nodes, and the results are averaged over multiple runs.

\subsection{Impact of the number of copies per video}
We study the maximum number of requests the system is able to satisfy as a function of the number $k$ of copies per video ($k\approx \frac{nd}{m}$). We suppose that nodes may watch more than one video (for instance if multiple playback devices depend on a single box) so the total number of requests can be larger than $n$, even if $n$ is the typical desired target. We set $s=15$, $u=1+\frac{1}{s}$ and $d=32$.
Figure~\ref{fig:multiple_vs_k} shows that the system is able to satisfy at least one request per node if $k\ge6$, independently from the requests pattern. Moreover, for the Random, Netflix and Netflix2 schedulers, $k\ge3$ is enough.

\newcommand{\figfig}[3]{
\begin{figure}[ht]
\begin{center}
\scalebox{0.8}{
 \includegraphics[width=.8\textwidth]{#1}
}
\end{center}
\vspace{-1cm}
\caption{#3}
\label{fig:#2}
\end{figure} 
}

\figfig{multiple_vs_k}{multiple_vs_k}{Requests satisfied as a function of $k$. $n=100$, $d=32$, $s=15$, $u=1+\frac{1}{s}$}

We indicate as reference the maximum number of requests the system can satisfy considering the global available upload bandwidth. Note, that for $k\ge10$, nodes almost fully utilize their upload bandwidth and the system asymptotically attains the maximum possible number of requests.

\subsection{Varying the number of stripes}
We study the impact of the number of stripes into which videos are split. For this purpose, we set $k=10$ , $d=32$ and $u=1+\frac{1}{s}$.

\figfig{multiple_vs_S}{multiple_vs_S}{Requests satisfied as a function of $s$. $n=100$, $d=32$, $k=10$, $u=1+\frac{1}{s}$}

Figure~\ref{fig:multiple_vs_S} shows that the system can satisfy $n$
requests or more for all schedulers but the adversarial.
With few stripes, the greedy scheduler may find blocking situations
were re-configuration of connections would indeed be necessary.
For low $s$, more requests are served with other schedulers. This is not
surprising, considering that a reduction of the number of stripes
leads to an increase of the system global bandwidth. As $s$
increases, $u$ tends toward $1$ and the number of satisfied requests
to $n$.

% \begin{figure}[t]
% \noindent\hspace{-.9cm}\includegraphics[width=.60\textwidth]{../figures_sigcomm/multiple_vs_S} 
% \caption{Requests satisfied as a function of $s$.}
% \label{fig:multiple_vs_S}
% \end{figure}

%\pagebreak
\subsection{Heterogeneous capacities}
We analyze the impact on the number of video requests satisfied in
presence of nodes with different upload capacities. Node capacity
distribution is a bounded Gaussian distribution with
$u=1+\frac{1}{s}$ and different variance values.  We set
$k=10$, $s=15$ and $d=32$.  Figure~\ref{fig:heteroigeneity} shows
the results. Schedulers can satisfy at least $n$ requests for small
or large values of upload variance, with a slight loss of efficiency
between. This may  come from the fact that we do not use a
proportional allocation scheme here.

\figfig{heterogeneity}{heteroigeneity}{Requests satisfied with heterogeneous capacities. $n=100$, $d=32$, $k=10$, $s=15$, $\bar{u}=1+\frac{1}{s}$}

% \begin{figure}[t]
% \noindent\hspace{-.9cm}\includegraphics[width=.60\textwidth]{../figures_sigcomm/heterogeneity} 
% \caption{Requests satisfied with heterogeneous capacities.}
% \label{fig:heteroigeneity}
% \end{figure}

\subsection{Node failures}

We evaluate the impact of off-line peers on the number of video requests the system can satisfy. We set $k=10$, $s=15$, $d=32$ and $u=1+\frac{1}{s}$. We then randomly select some nodes and we set them inactive for the simulation.

\figfig{peer_departure_static}{static-offline}{Number of requests satisfied with static off-line peers. $n=100$, $d=32$, $k=10$, $s=15$, $u=1+\frac{1}{s}$} 

Figure~\ref{fig:static-offline} shows the system can satisfy video
requests for at least all the active nodes in the system up to 40\%
failures $(a=0.6$). Then, a drastic decrease in the performance
occurs. As soon as there are 10\% of boxes off-line, the adversarial
scheduler is able to block the system.
% Only the adversarial scheduler is not able to satisfy video
% requests when more than 10\% of boxes are off-line.

% \begin{figure}[t]
% \noindent\hspace{-.9cm}\includegraphics[width=.60\textwidth]{../figures_sigcomm/peer_departure_static} 
% \caption{Number of requests satisfied with static off-line peers.$n=300$, $d=32$, $k=10$, $s=16$,$u=1+\frac{1}{s}$}
% \label{fig:static-offline}
% \end{figure}

\section{Conclusion}
\label{sect:conclusion}
%It is better to have more boxes or more bandwidth than more disk space.

In this paper, we show an average upload bandwidth threshold
for enabling a scalable fully distributed video-on-demand system.
Under that threshold, scalable catalog  cannot be achieved.
Above the threshold, linear catalog size is then possible
and the problem of connecting nodes to serve demands reduces to
a maximum flow problem. A slight upload provisioning allows to
build distributed algorithms achieving scalability.

% that a fully scalable VoD catalog can be
% designed. However, subtle trade-offs must be made between bandwidth,
% storage capacity and failure rates. One should retain that when the
% available bandwidth is too critical, or when byzantine failures
% occur, the catalog is sparse. Things get better for reasonable
% failures and slightly over-provisioned bandwidth, allowing
% proportional (in theory) or almost proportional (in practice)
% catalog scaling.

%\nocite{*}
\bibliographystyle{plain}
\bibliography{podc-short}

\newpage 

\section*{Appendix A}

\subsection*{Maximum flow scheduler}
\label{proof:expander}

We prove Theorem~\ref{th:expander} thanks to the two following
lemmas. For the sake of clarity, the proof  is written for
the homogeneous case. It is discussed later on how it generalizes
to heterogeneous capacities.

\begin{lemma}[Min-cut max-flow]
\label{lem:hall}
Consider a bipartite graph from $U$ to $V$ and an integer $b>0$.
There exist a $b$-matching where each node of node of $U$ has degree
1 and each node of $V$ has degree at most $b$ iff each subset
$U'\subseteq U$ has at least $|U'|/b$ neighbors in $V$
(i.e., the graph is a $1/b$-expander).
\end{lemma}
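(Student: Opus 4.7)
The plan is to prove both implications, with the forward direction being immediate and the backward direction reducing to classical Hall's marriage theorem by vertex splitting.

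For the easy direction, suppose the $b$-matching exists. For any $U'\subseteq U$, every $u\in U'$ is matched to a unique $v(u)\in V$. Since each $v\in V$ has matching-degree at most $b$, the set $\{v(u):u\in U'\}\subseteq N(U')$ has cardinality at least $|U'|/b$, which gives the expansion condition.

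For the backward (substantive) direction, I would reduce to standard Hall's theorem via vertex duplication. Construct an auxiliary bipartite graph $G'$ between $U$ and a vertex set $V'$ obtained by replacing each $v\in V$ with $b$ copies $v^{(1)},\dots,v^{(b)}$; for every original edge $(u,v)$ in $G$, add the $b$ edges $(u,v^{(j)})$ in $G'$. A matching in $G'$ that saturates $U$ corresponds exactly to a $b$-matching of the desired form in $G$: each $u\in U$ gets matched once, and each $v\in V$ receives at most $b$ matched edges (one per copy). By Hall's marriage theorem, such a matching exists in $G'$ iff $|N_{G'}(U')|\ge|U'|$ for every $U'\subseteq U$. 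By construction, $N_{G'}(U')$ is exactly the union of the $b$ copies of each vertex in $N_G(U')$, so $|N_{G'}(U')|=b\cdot|N_G(U')|$. Hence Hall's condition becomes $b\cdot|N_G(U')|\ge|U'|$, which is equivalent to $|N_G(U')|\ge|U'|/b$ because $|N_G(U')|$ is an integer, and this is precisely the $1/b$-expander hypothesis.

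An alternative route is a direct invocation of max-flow/min-cut: orient the bipartite graph from $U$ to $V$, add a source $s$ with unit-capacity arcs to each $u\in U$, and add a sink $t$ with capacity-$b$ arcs from each $v\in V$, all $(u,v)$ arcs having capacity $1$. A $b$-matching saturating $U$ exists iff the max flow has value $|U|$. By min-cut, this holds iff every $s$-$t$ cut has capacity $\ge|U|$; writing such a cut as $(U\setminus U')\cup N(U')$ for some $U'\subseteq U$ and optimizing over the $V$-side gives capacity $|U|-|U'|+b|N(U')|\ge|U|$, i.e.\ $|N(U')|\ge|U'|/b$. Either route is clean.

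The main obstacle is essentially notational rather than mathematical: one needs to be careful that the expansion condition $|N(U')|\ge|U'|/b$, though stated with a possibly non-integer right-hand side, is equivalent to the integer condition $b|N(U')|\ge|U'|$ since $|N(U')|$ is an integer. Once this is observed, both the Hall reduction and the min-cut argument go through without further work, and the lemma follows immediately.
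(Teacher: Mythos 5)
Your proposal is correct, and in fact it contains the paper's own argument verbatim as your ``alternative route'': the paper proves the lemma by adding a source $a$ with unit-capacity arcs to $U$ and a sink $z$ with capacity-$b$ arcs from $V$, and invokes min-cut/max-flow after asserting that the expander property forces every cut to have capacity at least $|U|$. Your primary route --- splitting each $v\in V$ into $b$ copies and applying Hall's marriage theorem --- is genuinely different and, arguably, tighter: the correspondence between matchings of $G'$ saturating $U$ and $b$-matchings of $G$, together with $|N_{G'}(U')|=b\,|N_G(U')|$, is completely explicit, whereas both your min-cut sketch and the paper's one-line claim gloss over the cuts that sever some of the middle $(u,v)$ arcs (for such cuts the $V$-side contribution is $\sum_{v}\min(e(U',v),b)$ rather than $b\,|N(U')|$, and showing this is still at least $|U'|$ takes a short extra deficiency argument). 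So the Hall reduction buys you a proof with no hidden step, at the cost of invoking Hall rather than max-flow; the flow formulation buys consistency with how the lemma is actually used in the scheduler (as a flow computation). One small nitpick: the equivalence of $b\,|N_G(U')|\ge|U'|$ with $|N_G(U')|\ge|U'|/b$ is just division by the positive number $b$; no integrality of $|N_G(U')|$ is needed.
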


\begin{proof}
The $1/b$-expander property is clearly necessary. We prove it is
sufficient by considering
the flow network obtained by adding a source node $a$ and
a sink node $z$
to the bipartite graph. An edge with capacity 1 is added from $a$
to each node in $U$. Edges of the bipartite graph are directed from
$U$ to $V$ and have capacity 1. An edge with capacity $b$ is added
from each node in $V$ to $z$. The $1/b$-expander property
implies that every cut has capacity $|U|$ at least.
The well-known min-cut max-flow theorem allows to conclude.
%~\cite{fordfulkerson}. ho pas besoin de ref pour ca ici ! FdM
\end{proof}

\begin{lemma}
\label{lem:rand}
Consider a random regular permutation graph of $kms=dns$ stripe
copies into the $dns$ memory slots of $n$ boxes.
The probability that $ki$ given copies fall into $p$ given boxes
with $pds\ge ki$ is less than $\left(\frac{p}{n}\right)^{ki}$.
\end{lemma}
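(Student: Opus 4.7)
The random regular allocation is modelled by a uniformly random permutation $\pi$ on $dns$ objects (the stripe copies) mapped to $dns$ labelled slots, the latter partitioned into $n$ groups of $ds$ slots, one per box. The event in question is that a specified set $S$ of $ki$ copies is mapped entirely into the $pds$ slots belonging to a specified set of $p$ boxes. My plan is to count favourable permutations and reduce the probability to a ratio of falling factorials, then bound each factor pointwise.

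First I would fix the two sets (the $ki$ copies and the $p$ boxes) and count: the number of permutations sending every element of $S$ into the block of $pds$ target slots is obtained by first choosing an injection from $S$ into that block, then extending arbitrarily on the remaining $dns-ki$ copies. This gives $\frac{(pds)!}{(pds-ki)!}\,(dns-ki)!$ favourable permutations out of $(dns)!$ total, so the probability equals
\[
\prod_{j=0}^{ki-1}\frac{pds-j}{dns-j}.
\]
The hypothesis $pds \geq ki$ ensures every factor is nonnegative and the numerator makes sense.

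Next I would bound each factor by $p/n$. Since $p \leq n$ we have $pds \leq dns$, and for any $j \geq 0$ one checks that
\[
\frac{pds-j}{dns-j}\ \leq\ \frac{pds}{dns}\ =\ \frac{p}{n},
\]
because $f(j)=(pds-j)/(dns-j)$ has derivative $(pds-dns)/(dns-j)^2 \leq 0$ in $j$, so the maximum is attained at $j=0$. Multiplying $ki$ such factors yields the announced bound $(p/n)^{ki}$.

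The only subtlety to watch for is the regularity of the allocation: unlike the purely random model where copies are placed independently, here placements are correlated through the permutation. But the bound on the ratio $(pds-j)/(dns-j)$ handles this cleanly, in fact making the bound slightly better than the independent case (it is strict as soon as $j \geq 1$ and $p < n$). No concentration argument is needed; the estimate is completely elementary, and this lemma will serve as the union-bound ingredient for the expansion property established in the subsequent proof of Theorem~\ref{th:expander}.
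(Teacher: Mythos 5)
Your proof is correct and follows essentially the same route as the paper's: both reduce the probability to the product $\prod_{j=0}^{ki-1}\frac{pds-j}{dns-j}$ (you by counting favourable permutations, the paper by chaining conditional probabilities) and then bound each factor by $\frac{p}{n}$ using the fact that $\frac{pds-j}{dns-j}$ is nonincreasing in $j$ when $p\le n$. No gap to report.
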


\begin{proof}
Drawing uniformly at random a permutation of the $kms=dns$ stripes amounts to choose uniformly at random a slot for the first stripe, then a slot for the second among the remaining slots and so on. The $ki$ stripes are ordered. Let $E_a$ denotes the event that the $a^{th}$ copy of stripe falls into one of the $pds$ slots of the $p$ boxes.
$P(\cap_{a \leq ki}E_a)=$ $P(E_1).P(E_2|E_1)...P(E_a|E_1 \cap E_2...\cap E_{a-1})...=$$\frac{pds}{nds}.\frac{pds-1}{nds-1}...$ $\frac{pds-a+1}{nds-a+1}...\leq \left(\frac{p}{n}\right)^{ki}$ (since $\frac{pds-i}{nds-i}\leq \frac{pds}{nds}$ for $p\leq n$).
%Consider permutations of the $kms=dns$ stripes.
%The number of permutations such that the $ki$ copies fall into
%the $pds$ memory slots of the given boxes is
%$\binom{pds}{ki}(ki)!(dns-ki)!$. As there are $(nds)!$
%permutations, the probability is 
%$\binom{pds}{ki}(ki)!(nds-ki)!/(nds)!
% = \frac{(pds)!/(pds-ki)!}{(nds)!/(nds-ki)!}
% = \frac{\binom{pds}{ki}}{\binom{nds}{ki}}$.
%Using the classical bounds
%$\binom{x}{y}\le \left(\frac{xe}{y}\right)^y$ 
%and $\binom{x}{y}\ge \left(\frac{x}{y}\right)^y$, 
%we get the desired bound.
\end{proof}

%We now prove Theorem~\ref{th:expander} in the homogeneous case.

\begin{proof}[of Theorem~\ref{th:expander}]
We assume that $s\le c$ is sufficiently large 
to ensure $u\ge 1+\frac{1}{s}$. We suppose   $u\ge \mun$ and
$s\ge 2$.
Consider the multiset of stripe requests at some time $t$.
Its size is $ns$ at most as there are no more than $n$ videos played.
Let $S$ be a sub-multiset
of size $i$ among the requested stripes. Let $i_1$ be the
number of pairwise distinct requests in $S$ and $i_2=i-i_1$
be the number of duplicated requests in $S$.
As swarm growth is bounded by $\mun$, there are at least
$\alpha i_2$ nodes where duplicate request can be downloaded 
with $\alpha=\frac{1}{\mun}$.

Let $B(S)$ denote the set of boxes from which any stripe of $S$ may
be downloaded. 
From Lemma~\ref{lem:hall}, a connection matching for serving the
request can always be found if no multiset $S$ of at most $rs$
requested stripes verifies $|B(S)|<j$ with $j=\frac{i}{us}$.
Note that $B(S)$ includes at least the 
given boxes where duplicate requests may be downloaded
thanks to playback caching. This represents at least $\alpha i_2$
boxes and $|B(S)|\ge j$ for  $\alpha i_2\ge j$. We may thus
consider only $\alpha i_2<i/us$ (implying $i_1>(1-1/\alpha us)i$).
By summing over all sets of $j=i/us$ boxes
and using Lemma~\ref{lem:rand}, we get the following bound
relying only on the stripe copies placed according to the video
allocation graph (this probability is $0$ for $i \leq us$):
$P(|B(S)|<j)\le \binom{n}{j}\left(\frac{j}{n}\right)^{ki_1}
  \le \left(\frac{unse}{i}\right)^{i/us}
      \left(\frac{i}{uns}\right)^{ki_1}
$. The last inequality is obtained by using the standard upper bound of the binomial coefficient ($\binom{b}{a} \leq \left(\frac{b e}{a}\right)^a$).

Using Markov inequality, the probability that some obstruction
multiset $S$ for some request exists is bounded by the expected number of
such obstructions. By summing the above inequality 
over all multisets $S$ of at most
$ns$ stripes, we get the 
following bound on the
probability $p$ that the graph cannot satisfy all possible requests:
$p \leq \sum_{i=us}^{ns} \sum_{i_1=i(1-1/\alpha us)}^i 
     M(i,i_1)
       \left(\frac{unse}{i}\right)^{i/us}
       \left(\frac{i}{uns}\right)^{ki_1} $
where $M(i,i_1)$ is the number of multisets of cardinality $i$ taken
from sets of stripes of cardinality $i_1$. $M(i,i_1)$ is at most
$M(i,i_1) \leq \binom{\lfloor nds/k \rfloor}{i_1}
\binom{i+i_1-1}{i_1-1} \leq \left(\frac{ndse}{ki}\right)^i
\binom{2i}{i}\leq \left(\frac{4ndse}{i}\right)^i$ since $i_1 \leq i$
and considering that $k\geq 1$. Notice also that
$\left(\frac{i}{uns}\right)^{ki_1} \leq
\left(\frac{i}{uns}\right)^{ki-ki/\alpha us} $. 
The probability is then at most:
$p \leq
 \sum_{i=us}^{ns} \frac{i}{\alpha us} \left(\frac{i}{uns}\right)^{\kappa i} 
   \delta^i \leq \frac{n}{\alpha u}\sum_{i=us}^{ns}
   \left(\frac{i}{uns}\right)^{\kappa i}\delta^i $ where
   $\delta=4de^{1+1/us}/u$ and $\kappa=k-k/\alpha us-1/us-1$. 

%For
 %  the sake of simplicity, we take practically reasonable values for
 %  $s$ and $u$, so that $us\geq \alpha us \geq 2$, we can write then $\kappa \geq k/2-3/2$. Similarly, by taking a capacity $d \geq 2e^{1+1/us}/u$, we have $\delta \leq d^2$. 

%With these assumptions, the probability is bounded by $p \leq
%\frac{n}{\alpha u}\sum_{i=us}^{ns} \left(\frac{i}{uns}\right)^{(k/2-3/2) i}d^{2i} $.

It is easy to check that as a function of $i$ the terms of the sum
$\phi(i)=\left(\frac{i}{uns}\right)^{\kappa} \delta^{i}$ decrease
from $\phi(us)$, reach a minimum at $\phi(i^\star)=\phi\left(\frac{uns}{\delta^{1/\kappa}e}\right)$ then increase to $\phi(ns)$. Using this
fact, we bound $p$ by considering separately the sum for $i<i^\star$
and $i>i^\star$ and by replacing each term with the maximum term on
its side. On one hand, $\frac{n}{\alpha u}\sum_{i=us}^{\lfloor i^\star
  \rfloor} \left(\frac{i}{uns}\right)^{\kappa} \delta^{i}$ 
$\leq \frac{n}{\alpha u}.ns.\phi(us)
=\frac{n}{\alpha u}.ns.\frac{1}{n^{\kappa us}} \delta^{us}$
$ \leq O\left(\frac{1}{n^{\kappa us}}\right)$. On the other hand,
the sum of the terms of rank greater than $i^\star$ gives
 $ \frac{
  n}{\alpha u} \sum_{\lfloor i^\star \rfloor+1}^{ns}
\left(\frac{i}{uns}\right)^{-\kappa i}\delta^{i}$
$\leq \frac{n}{\alpha u}.ns. u^{-\kappa n s}\delta^{ns}$
$ \leq O\left(n^2
  \left(u^{-\kappa}\delta\right)^{ns}\right)$. Finally,
  $p\leq  O\left(\frac{1}{n^{\kappa us}}\right)+O\left(\left(u^{-\kappa}\delta\right)^{ns}\right)$. For the first term to vanish, we need $u^{-\kappa}\delta<1$ and then $\kappa>\log_u(\delta)$. For this, we need to replicate each stripe at least $k$ is then $k>\log_u(d)\frac{\alpha us}{\alpha us-1}+\frac{\alpha +\alpha us \log_u(4 e^2)}{\alpha us-1}$. For the sake of simplicity, consider $s \geq 2$ the lower bound on the number of replicates is $k>2 \log_u(d)+2 \log_u(4 e^2)+1$. In this case the probability of failure is at most $p \leq O\left(\frac{1}{n^{\kappa us}}\right)$ (note that $\kappa us > 0$) and then the bipartite graph can satisfy all possible requests with high probability. Since the number of videos that can be stored is $nd/k$ and given the condition on $k$, the storage capacity is $\Omega(nd/\log_u(d))$.

\end{proof} 

Now consider the heterogeneous case.
Lemma~\ref{lem:hall} and the above proof may be generalized.
Recall that box $b$ has storage
capacity $d_b$ and upload capacity
$u_b$.  The condition for an obstruction then
becomes $\sum_{b\in B(S)} su_b< |S|=i$.
We can then consider any subset $E$ of boxes with overall capacities
$U_E=\sum_{b\in E} u_b$ and $D_E=\sum_{b\in E}d_b$ such that
$U_E < i/s$. 
As boxes are chosen according to their capacity,
the probability to put a stripe in $E$ with random allocation is thus 
$\frac{D_E}{nd} = \frac{D_E U_E}{ndU_E}<\frac{D_E}{dU_E}\frac{i}{ns}$
for an obstruction.
Assuming $d\frac{U_E}{D_E}\geq u'$ for all $E$, we can follow the
same tracks for the proof with the probability of obstruction being
less than $\frac{j}{n}$ with $j=\frac{i}{u's}$.
With a smallest upload  capacity of 1 stripe, the total number
of such sets $E$ is bounded by $\binom{usn}{i}$ instead of
$\binom{n}{j}$ in the above proof.
This larger factor in the sum over all multiset of request
is not a problem when taking a slightly larger value of $k$.
We can thus  get similar 
bounds as long as $d\frac{U_E}{D_E}\geq u'$ for some
$u' \ge \max\set{1+\frac{1}{c},\mun}$. 

Boxes with capacity lower
than $u'$ can be grouped with high upload capacity boxes
to obtain the desired property as proposed 
in Lemma~\ref{lem:reduction}.

\subsection*{Poor Upload Capacity Boxes}

\begin{proof}[of Lemma~\ref{lem:reduction}]
Boxes $b$ with
upload capacity $b<\mun$ are said to be \emph{poor}.
Boxes with
upload capacity exactly $\mun$ are said to be \emph{medium}.
Boxes $b$ with  $u_b> \mun$
upload capacity are said to be \emph{rich}. 
Let $P$, $M$ 
and $R$ denote the sets of poor, medium 
and rich boxes respectively. We set $n_P=|P|$, $n_M=|M|$, 
and $n_R=|R|$
(Note that $n=n_P+n_M+n_R$). Let $u_P=\frac{U_P}{n_P}$ 
and $u_R=\frac{U_R}{n_R}$ 
be the mean and overall upload 
capacities of poor and rich
boxes respectively.
 
We construct $B$ from $A$ with same video allocation.
For the sake of simplicity, we assume that $s/\mun$ is an integral
value as well as $u_bs$ for each box $b$.
% Notice that each box output an integer number of stripe, so the upload
% may be considerer as a multiple of $\frac{1}{s}$.
In a pre-processing step, for each poor box $b$, we 
reserve $\mun(1-\frac{u_b}{\mun})s=\mun s - u_b s$
upload slots from the rich boxes. This upload will be used to
forward $(1-\frac{u_b}{\mun})s$ stripes to the poor box and
serve new arrivals in the swarm for up to
$(\mun-1)(1-\frac{u_b}{\mun})s$ stripes.
This assignment should
balance the overall number $s_b$  of slots reserved on a rich box $b$
such that its remaining upload
capacity $u_b'=u_b-\frac{s_b}{s}$ remains no less than $\mun$.
(In a proportionally heterogeneous system, one would typically
choose $s_b$ proportional to $u_b$.)
A corresponding space of $\frac{s_b}{\mun s}$ should also be additionally be 
reserved for playback caching. We thus set $d'_j=d_j+\frac{s_j}{\mun
  s}$
This assignment is possible when $U_R-\mun n_R\ge \mun n_P - U_P$, i.e. $u\ge \mun$.
Now we use a video allocation scheme for capacities $u_b', d_b'$
(where $u_b'=u_b$ and $d_b'=d_b$ for each poor or medium 
box $b$). The connection
scheduler works as previously except for the download connections
of poor boxes. When a poor box $b$ requests a video, the $s$ stripes are
downloaded from the boxes decided by the previous scheme.
However, $b$ downloads $u_b\frac{s}{\mun}$ 
stripes directly but the $(1-\frac{u_b}{\mun})s$
others are downloaded via the rich boxes with reserved upload slots 
for box $b$. These rich boxes participate in the caching
of the stripes they forward instead of $b$. This scheme allows to
increase the overall upload capacity of the set $E$ of all boxes
caching some stripe requested by $p$ boxes so that $U_E\ge \mun p$.
\end{proof}

\end{document}